\newlength{\figwidth}
\newtheorem{myDef}{Definition}
\newtheorem{myTheo}{Theorem}
\newtheorem{cor}{Corollary}
\newtheorem{prop}{Proposition}
\begin{document}
\setcounter{page}{1}
\title{The Ginibre Point Process as a Model for Wireless Networks with Repulsion}
\author{Na Deng, Wuyang Zhou, \emph{Member, IEEE}, and Martin Haenggi, \emph{Fellow, IEEE}
\thanks{Na Deng and Wuyang Zhou are with the Dept. of Electronic Engineering and Information Science, University of Science and Technology of China (USTC), Hefei, 230027, China (e-mail: ndeng@mail.ustc.edu.cn, wyzhou@ustc.edu.cn).
Martin Haenggi is with the Dept. of Electrical Engineering, University of Notre Dame, Notre Dame 46556, USA (e-mail: mhaenggi@nd.edu).

This work was supported by National programs for High Technology Research and Development (2012AA011402), National Natural Science Foundation of China (61172088), and by the US NSF grants CCF 1216407 and CNS 1016742.
}}

\maketitle
\thispagestyle{empty}
\begin{abstract}
The spatial structure of transmitters in wireless networks plays a key role in evaluating the mutual interference and hence the performance.
Although the Poisson point process (PPP) has been widely used to model the spatial configuration of wireless networks, it is not suitable for networks with repulsion. The Ginibre point process (GPP) is one of the main examples
of determinantal point processes that can be used to model random phenomena where repulsion is observed. Considering the
accuracy, tractability and practicability tradeoffs, we introduce and promote the $\beta$-GPP, an intermediate class
between the PPP and the GPP, as a model for wireless networks when the nodes
exhibit repulsion. To show that the model leads to analytically tractable results in several cases of interest, we derive the mean and variance
of the interference using two different approaches: the Palm measure approach and the reduced second moment approach, and then provide approximations of the interference distribution by three known probability density functions. Besides, to show that the model is relevant for cellular systems, we derive the coverage probability of the typical user and also find that the fitted $\beta$-GPP can closely model
the deployment of actual base stations in terms of the coverage probability and other statistics.
\end{abstract}
\begin{IEEEkeywords}
Stochastic geometry; Ginibre point process; wireless networks; determinantal point process; mean interference; coverage probability; Palm measure; moment density.
\end{IEEEkeywords}

\section{Introduction}
\label{sec:introduction}
\subsection{Motivation}
\label{subsec:motivation}
The spatial distribution of transmitters is critical in determining the power of the received signals and the mutual interference, and hence the performance, of a wireless network. As a consequence, stochastic geometry models for wireless networks have recently received widespread attention since they capture the irregularity and variability of the node configurations found in real networks and provide theoretical insights (see, e.g., \cite{haenggi2009stochastic,andrews2010primer,baccelli2010stochastic,haenggi2012stochastic}).
However, most works that use such stochastic geometry models for wireless networks assume, due to its tractability, that nodes are distributed according to the Poisson point process (PPP), which means the wireless nodes are located independently of each other. However, in many actual wireless networks, node locations are spatially correlated, i.e., there exists repulsion (or attraction) between nodes, which means that the node locations of a real deployment usually appear to form a more regular (or more clustered) point pattern than the PPP. In this paper, we focus on networks that are more regular than the PPP. In order to model such networks more accurately, {\em hard-core} or {\em soft-core} processes that account for the repulsion between nodes are required.

It is well recognized that point processes like the Mat\'ern hard-core process (MHCP), the Strauss process and the perturbed lattice are more realistic point process models than the PPP and the hexagonal grid models since they can capture the spatial characteristics of the actual network deployments better \cite{haenggi2012stochastic}. More importantly, through fitting, these point processes may have nearly the same characteristics as the real deployment of network nodes \cite{guo2013spatial}. However, the main problem with these point processes is their limited analytical tractability, which makes it difficult to analyze the properties of these repulsive point processes, thus limiting their applications in wireless networks.

In this paper, we propose the {\em Ginibre point process} (GPP) as a model for wireless networks whose nodes exhibit repulsion. The GPP belongs to the class of determinantal point processes (DPPs) \cite{lavancier2013determinantal} and thus is a soft-core model. It is less regular than the lattice but more regular than the PPP. Specifically, we focus on a more general point process denoted as $\beta$-GPP, $0<\beta\leq1$, which is a thinned and re-scaled GPP. It is the point process obtained by retaining, independently and with probability $\beta$, each point of the GPP and then applying the homothety of ratio $\sqrt{\beta}$ to the remaining points in order to maintain the original intensity of the GPP. Note that the $1$-GPP is the GPP and that the $\beta$-GPP converges weakly to the PPP of the same intensity as $\beta\rightarrow0$ \cite{torrisi2013large}. In other words, the family of $\beta$-GPPs generalizes the GPP and also includes the PPP as a limiting case. Intuitively, we can always find a $\beta$ of the $\beta$-GPP to model those real network deployments with repulsion accurately as long as they are not more regular than the GPP. In view of this, we fit the $\beta$-GPPs to the base station (BS) locations in real cellular networks obtained from a public database. Another attractive feature of the GPP is that it has some critical properties (in particular, the form of its moment densities and its Palm distribution) that other soft-core processes do not share, which enables us to obtain expressions or bounds of important performance metrics in wireless networks.

\subsection{Contributions}
\label{subsec:contribution}
The main objective of this paper is to introduce and promote the GPP as a model for wireless networks where nodes exhibit repulsion. The GPP not only captures the geometric characteristics of real networks but also is fairly tractable analytically, in contrast to other repulsive point process models.

Specifically, we first derive some classical statistics, such as the $K$ function, the $L$ function and the $J$ function, the distribution of each point's modulus and the Palm measure for the $\beta$-GPPs. And then, to show that the model leads to tractable results in several cases of interest, we analyze the mean and variance of the interference using two different approaches: one is with the aid of the Palm measure and the other is to use the reduced second moment measure. We also provide comparisons of the mean interference between the $\beta$-GPPs and other common point processes. Further, based on the mean and variance of the interference, we provide approximations of the interference distribution using three known probability density functions, the gamma distribution, the inverse Gaussian distribution, and the inverse gamma distribution.

As an application to cellular systems, we derive a computable representation for the coverage probability in cellular networks---the probability that the signal-to-interference-plus-noise ratio (SINR) for a mobile user achieves a target threshold. To show that the model is indeed relevant for cellular systems, we fit the $\beta$-GPPs to the BS locations in real cellular networks obtained from a public database. The fitting results demonstrate that the fitted $\beta$-GPP has nearly the same coverage properties as the given point set and thus, in terms of coverage probability, it can closely model the deployment of actual BSs. We also find that the fitted value of $\beta$ is close to 1 for urban regions and relatively small (between 0.2 and 0.4) for rural regions, which means the nodes in actual cellular networks are less regular than the GPP and thus can always be modeled by a $\beta$-GPP through tuning the parameter $\beta$. Therefore, the $\beta$-GPP is a very useful and accurate point process model for most repulsive wireless networks, especially for cellular networks.

\subsection{Related Work}
\label{subsec:related work}
Stochastic geometry models have been successfully applied to model and analyze wireless networks in the last two decades since they not only capture the topological randomness in the network geometry but also lead to tractable analytical results \cite{haenggi2009stochastic}. The PPP is by far the most popular point process used in the literature because of its tractability. Models based on the PPP have been used for a variety of networks, including cellular networks, mobile ad hoc networks, cognitive radio networks, and wireless sensor networks, and the performance of PPP-based networks is well characterized and well understood (see, e.g., \cite{andrews2010primer,haenggi2009stochastic,baccelli2010stochastic,elsawy2013stochastic} and references therein). Although the PPP model provides many useful theoretical results, the independence of the node locations makes the PPP a dubious model for actual network deployments where the wireless nodes appear spatially negatively correlated, i.e., where the nodes exhibit repulsion. Hence, point processes that consider the spatial correlations, such as the MHCP and the Strauss process, have been explored recently since they can better capture the spatial distribution of the network nodes in real deployments \cite{haenggi2012stochastic,guo2013spatial,haenggi2011mean}. However, their limited tractability impedes further applications in wireless networks and leaves many challenges to be addressed.

The GPP, one of the main examples of determinantal point processes on the complex plane, has recently been proposed as a model for cellular networks in the technical report \cite{miyoshi2012gpp}. Although there is a vast body of research on GPPs used to model random phenomena with repulsion, no refereed article has used the GPP as a model for wireless networks. Only two (non-refereed) works focus on this model: \cite{miyoshi2012gpp} proposes a stochastic geometry model of cellular networks according to the GPP and derives the coverage probability and its asymptotics as the SINR threshold becomes large but does not consider the more general case of the $\beta$-GPP; \cite{torrisi2013large} investigates the asymptotic behavior of the tail of the interference using different fading random variables in wireless networks, where nodes are distributed according to the $\beta$-GPP. Different from them, we mainly focus on the mean and variance of the interference in $\beta$-Ginibre wireless networks, the coverage, and the fitting to real data. Since the family of $\beta$-GPPs constitutes an intermediate class between the PPP (fully random) and the GPP (relatively regular), it is intuitive that we can use the $\beta$-GPP to model a large class of actual wireless networks by tuning the parameter $\beta$.

\subsection{Mathematical Preliminaries}
\label{subsec: Mathematical Preliminaries}
Here we give a brief overview of some terminology, definitions and properties for the Ginibre point process. Readers are referred to \cite{lavancier2013determinantal,miyoshi2012gpp,decreusefond2013efficient,goldman2010palm} for further details. We use the following notation. Let $\mathbb{C}$ denote the complex plane. For a complex number $z=z_1+jz_2$ (where $z_1,z_2\in\mathbb{R}$ and $j=\sqrt{-1}$), we denote by $\bar{z}=z_1-jz_2$ the complex conjugate and by $|z|=\sqrt{z_1^2+z_2^2}$ the modulus. Consider a Borel set $S\subseteq\mathbb{R}^d$. For any function $K: S\times S\rightarrow\mathbb{C}$, let $[K](x_1,\ldots,x_n)$ be the $n\times n$ matrix with $(i,j)$'th entry $K(x_i,x_j)$. For a square complex matrix $A$, let $\det A$ denote its determinant.
\begin{myDef}
{\rm\textbf{(Determinantal point processes).}} Suppose that a simple locally finite spatial point process $\Phi$ on $S\subseteq\mathbb{R}^d$ has product density functions
\ifCLASSOPTIONonecolumn
\begin{equation} \label{6}
\varrho^{(n)}(x_1,\ldots,x_n)=\det[K](x_1,\ldots,x_n),~~~(x_1,\ldots,x_n)\in S^{n},~~n=1,2,\ldots,
\end{equation}
\else
\begin{multline} \label{6}
\varrho^{(n)}(x_1,\ldots,x_n)=\det[K](x_1,\ldots,x_n), \\
(x_1,\ldots,x_n)\in S^{n},~~n=1,2,\ldots,
\end{multline}
\fi
and for any Borel function $h$: $S^n\rightarrow [0,\infty)$,
\ifCLASSOPTIONonecolumn
\begin{equation} \label{2}
\mathbb{E}\sum\limits_{x_1,\ldots,x_n\in\Phi}^{\neq}h(x_1,\ldots,x_n)=\int_{S}\cdots\int_{S}\varrho^{(n)}(x_1,\ldots,x_n)h(x_1,\ldots,x_n)dx_1\cdots dx_n,
\end{equation}
\else
\begin{multline} \label{2}
\mathbb{E}\sum\limits_{x_1,\ldots,x_n\in\Phi}^{\neq}h(x_1,\ldots,x_n) \\
=\int_{S}\cdots\int_{S}\varrho^{(n)}(x_1,\ldots,x_n)h(x_1,\ldots,x_n)dx_1\cdots dx_n,
\end{multline}
\fi
Then $\Phi$ is called a determinantal point process (DPP) with kernel $K$, and we write $\Phi\sim$ DPP$_S(K)$.
\end{myDef}
This definition implies that $\varrho^{(n)}(x_1,...,x_n)=0$ when $x_i=x_j$ for $i\neq j$, which reflects the repulsiveness of a DPP. $\varrho\equiv\varrho^{(1)}$ is the intensity function and $g(x,y)=\varrho^{(2)}(x,y)/[\varrho(x)\varrho(y)]$ is the pair correlation function, where we set $g(x,y)=0$ if $\varrho(x)$ or $\varrho(y)$ is zero.
\begin{myDef}
{\rm\textbf{(The standard Ginibre point process).}} $\Phi$ is said to be the Ginibre point process (GPP) when the kernel $K$ is given by
\begin{equation}
K(x,y)=e^{x\bar{y}},~~x,y\in\mathbb{C},
\end{equation}
with respect to the Gaussian measure $\nu(dz)=\pi^{-1}e^{-|z|^2}m(dz)$, where $m$ denotes the Lebesgue measure on $\mathbb{C}$. Alternatively,
\begin{equation}
K(x,y)=\pi^{-1}e^{-(|x|^2+|y|^2)/2}e^{x\bar{y}},~~x,y\in\mathbb{C},
\end{equation}
with respect to the Lebesgue measure $\nu(dz)=m(dz)$ \cite{miyoshi2012gpp}.
\end{myDef}
\emph{Remarks:}
\begin{enumerate}
  \item By the definition of the GPP and from (\ref{2}), we see that $\mathbb{E}[\Phi(S)]=\int_S \varrho(x)dx=\pi^{-1}|S|$, $S\subset\mathbb{C}$; that is, the first-order density of the GPP is $\pi^{-1}$.
  \item Since the GPP is motion-invariant, the second moment density depends only on the distance of its arguments, i.e., $\varrho^{(2)}(x,y)=\varrho^{(2)}(r)$, where $r=|x-y|$, $\forall x,y\in\mathbb{C}$. We have
\ifCLASSOPTIONonecolumn
\begin{equation}
\!\!\!\!\!\varrho^{(2)}(x,y)\!=\!\det\!\begin{bmatrix}
   \pi^{-1}e^{-(|x|^2+|x|^2)/2}e^{x\bar{x}}\!&\!\pi^{-1}e^{-(|x|^2+|y|^2)/2}e^{x\bar{y}}\\
   \pi^{-1}e^{-c(|x|^2+|y|^2)/2}e^{\bar{x}y}\!&\!\pi^{-1}e^{-(|y|^2+|y|^2)/2}e^{y\bar{y}}\\
\end{bmatrix}
\!=\!\pi^{-2}(1-e^{-r^2})=\varrho^{(2)}(r).
\end{equation}
\else
\begin{multline}
\!\!\!\!\!\varrho^{(2)}(x,y) \\
\!\!\!\!\!=\det\!\begin{bmatrix}
   \pi^{-1}e^{-(|x|^2+|x|^2)/2}e^{x\bar{x}}\!&\!\pi^{-1}e^{-(|x|^2+|y|^2)/2}e^{x\bar{y}}\\
   \pi^{-1}e^{-c(|x|^2+|y|^2)/2}e^{\bar{x}y}\!&\!\pi^{-1}e^{-(|y|^2+|y|^2)/2}e^{y\bar{y}}\\
\end{bmatrix} \\
=\pi^{-2}(1-e^{-r^2})=\varrho^{(2)}(r).
\end{multline}
\fi
  \item The pair correlation function \cite[Def.~6.6]{haenggi2012stochastic} is
\begin{equation}
g(x,y)\triangleq\frac{\varrho^{(2)}(x,y)}{\varrho(x)\varrho(y)}=\frac{\pi^{-2}(1-e^{-r^2})}{\pi^{-2}}=1-e^{-r^2}.
\end{equation}
Since $g(r)<1~\forall r$, the GPP is repulsive at all distances.
\end{enumerate}
\begin{myDef}
{\rm\textbf{(The thinned and re-scaled Ginibre point process).}} The thinned and re-scaled Ginibre point process ($\beta$-GPP), $0<\beta\leq1$, is a point process obtained by retaining, independently and with probability $\beta$, each point of the GPP and then applying the homothety of ratio $\sqrt{\beta}$ to the remaining points in order to maintain the original intensity of the GPP.
\end{myDef}
Note that the 1-GPP is the GPP and that the $\beta$-GPP converges weakly to the PPP of intensity $1/\pi$ as $\beta\rightarrow0$. The parameter $\beta$ can be used to ``interpolate'' smoothly from the GPP to the PPP. And the $\beta$-GPPs are still determinantal processes and satisfy the usual conditions of existence and uniqueness (see, e.g., \cite{goldman2010palm}).
\begin{myDef} \label{3}
{\rm\textbf{(The scaled $\beta$-Ginibre point process).}} The scaled $\beta$-Ginibre point process is a scaled version of the $\beta$-GPP on the complex plane with intensity $\lambda=c/\pi$, where $c>0$ is the scaling parameter used to control the intensity. The kernel of the scaled $\beta$-GPP is given by $K_{\beta,c}(x,y)=\beta e^{\frac{c}{\beta}x\bar{y}}$ with respect to the reference measure $\nu_{\beta,c}(dz)=\frac{c}{\beta\pi}e^{-\frac{c}{\beta}|z|^2}m(dz)$, or, equivalently, $K_{\beta,c}(x,y)=(c/\pi)e^{-\frac{c}{2\beta}(|x|^2+|y|^2)}e^{\frac{c}{\beta}x\bar{y}}$ with respect to the Lebesgue measure \cite{goldman2010palm}.
\end{myDef}

\subsection{Organization}
\label{subsec:organization}
The rest of the paper is organized as follows: Section \ref{sec:properties of the beta-GPP} introduces some basic but important properties of the $\beta$-GPPs. Section \ref{sec:Analysis of the mean and variance of interference} analyzes the mean and variance of the interference and make approximations of the interference distribution. Section \ref{sec:Coverage probability} derives the coverage probability of the typical user in $\beta$-Ginibre wireless networks. Section \ref{sec:Fitting the beta-GPP to actual network deployments} presents fitting results of the $\beta$-GPPs to actual network deployments, and Section \ref{sec:conclusions} offers the concluding remarks.

\section{Properties of the $\beta$-Ginibre Point Processes}
\label{sec:properties of the beta-GPP}
\subsection{Basic properties}
\label{subsec:basic properties}
In this section, we consider a scaled $\beta$-GPP, denoted by $\Phi_c$. Since the GPP is motion-invariant, the second moment density $\varrho_{\beta,c}^{(2)}(x,y)=\varrho_{\beta,c}^{(2)}(u)$, where $u=|x-y|,~\forall x,y\in\mathbb{C}$, and the third moment density $\varrho_{\beta,c}^{(3)}(x,y,z)=\varrho_{\beta,c}^{(3)}(o,w_1,w_2)$, where $w_1=y-x$, $w_2=z-x$, $\forall x,y,z\in\mathbb{C}$. Based on Definition \ref{3} and (\ref{6}), we have
\begin{equation}
\varrho_{\beta,c}^{(2)}(u)\!=\!\frac{c^2}{\pi^2}(1-e^{-\frac{c}{\beta}u^2}),
\end{equation}
\ifCLASSOPTIONonecolumn
\begin{equation} \label{8}
\varrho_{\beta,c}^{(3)}(o,w_1,w_2)=\frac{c^3}{\pi^{3}}\Big(1\!-\!e^{-\frac{c}{\beta}|w_1|^2}\!\!-\!e^{-\frac{c}{\beta}|w_2|^2}\!\!-\!e^{-\frac{c}{\beta}|w_1-w_2|^2}\big(1-e^{-\frac{c}{\beta}w_1\bar{w_2}}\!\!-\!e^{-\frac{c}{\beta}w_2\bar{w_1}}\big)\Big).
\end{equation}
\else
\begin{multline} \label{8}
\varrho_{\beta,c}^{(3)}(o,w_1,w_2)=\frac{c^3}{\pi^{3}}\Big(1-e^{-\frac{c}{\beta}|w_1|^2}-e^{-\frac{c}{\beta}|w_2|^2} \\
-e^{-\frac{c}{\beta}|w_1-w_2|^2}\big(1-e^{-\frac{c}{\beta}w_1\bar{w_2}}-e^{-\frac{c}{\beta}w_2\bar{w_1}}\big)\Big).
\end{multline}
\fi

It is known that the moduli (on the complex plane) of the points of the GPP have the same distribution as independent gamma random variables \cite{kostlan1992spectra}. For the $\beta$-GPP, from Theorem 4.7.1 in \cite{hough2009zeros}, we have the following result:
\begin{prop} \label{4}
Let $\Phi_c=\{X_i\}_{i\in\mathbb{N}}$ be a scaled $\beta$-GPP. For $k\in\mathbb{N}$, let $Q_k$ be a random variable with probability density function
\begin{equation} \label{10}
f_{Q_k}(q)=\frac{q^{k-1}e^{-\frac{c}{\beta}q}}{(\beta/c)^k\Gamma(k)},
\end{equation}
i.e., $Q_k\sim {\rm gamma}(k,\beta/c)$, with $Q_i$ independent of $Q_j$ if $i\neq j$.
Then the set $\{|X_i|^2\}_{i\in\mathbb{N}}$ has the same distribution as the set $\Xi$ obtained by retaining from $\{Q_k\}_{k\in\mathbb{N}}$ each $Q_k$ with probability $\beta$ independently of everything else.
\end{prop}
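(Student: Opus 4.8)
The plan is to exhibit the spectral decomposition of the kernel $K_{\beta,c}$ with respect to the reference measure $\nu_{\beta,c}$ and then invoke the radial structure theorem for determinantal processes with monomial eigenfunctions (Theorem 4.7.1 of \cite{hough2009zeros}). First I would record the basic moment integral: for non-negative integers $m,n$, passing to polar coordinates $z=re^{i\theta}$ and substituting $s=r^2$,
\[
\int_{\mathbb{C}} z^m \bar{z}^n\,\nu_{\beta,c}(dz)=\frac{c}{\beta\pi}\int_0^{2\pi}\!\!\int_0^{\infty} r^{m+n}e^{i(m-n)\theta}e^{-\frac{c}{\beta}r^2}\,r\,dr\,d\theta=\delta_{mn}\,m!\,(\beta/c)^m ,
\]
the angular integral annihilating all off-diagonal terms. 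Hence the normalised monomials $\phi_k(z)=\sqrt{(c/\beta)^{k-1}/(k-1)!}\;z^{k-1}$, $k\in\mathbb{N}$, form an orthonormal family in $L^2(\nu_{\beta,c})$, and expanding the exponential series gives
\[
K_{\beta,c}(x,y)=\beta e^{\frac{c}{\beta}x\bar{y}}=\sum_{k\ge 1}\beta\,\frac{(c/\beta)^{k-1}}{(k-1)!}\,x^{k-1}\bar{y}^{k-1}=\sum_{k\ge 1}\beta\,\phi_k(x)\overline{\phi_k(y)} ,
\]
so $K_{\beta,c}$ is the integral operator with eigenfunctions $\{\phi_k\}$ and \emph{all} eigenvalues equal to $\beta$.

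Next I would use the standard representation of a DPP whose eigenvalues lie in $(0,1)$ as a mixture of projection DPPs: form a random index set $I\subseteq\mathbb{N}$ by including each $k$ independently with probability $\beta$, and, conditionally on $I$, run the projection DPP with kernel $\sum_{k\in I}\phi_k(x)\overline{\phi_k(y)}$. Because the $\phi_k$ are scalar multiples of monomials and $\nu_{\beta,c}$ is rotation-invariant, Theorem 4.7.1 of \cite{hough2009zeros} applies to this conditional projection process: the multiset of squared moduli of its points equals in distribution $\{R_k:k\in I\}$, where the $R_k$ are independent and $R_k$ has the law of $|Z|^2$ when $Z$ has density proportional to $|z|^{2(k-1)}$ with respect to $\nu_{\beta,c}$. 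Averaging over $I$ then shows that $\{|X_i|^2\}_{i\in\mathbb{N}}$ has the same distribution as the set obtained by retaining each $R_k$, $k\in\mathbb{N}$, independently with probability $\beta$, which is precisely the thinning described in the statement.

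It remains to identify $R_k$ with $Q_k$. Using the change of variables $s=|z|^2$ (so that a radial density $h(|z|)$ on $\mathbb{C}$ pushes forward to the density $\pi h(\sqrt{s})$ on $(0,\infty)$), together with the normalising constant $(k-1)!(\beta/c)^{k-1}$ from the moment integral above, the density of $R_k$ is
\[
\frac{c}{\beta}\cdot\frac{s^{k-1}e^{-\frac{c}{\beta}s}}{(k-1)!\,(\beta/c)^{k-1}}=\frac{s^{k-1}e^{-\frac{c}{\beta}s}}{(\beta/c)^{k}\,\Gamma(k)}=f_{Q_k}(s),
\]
i.e.\ $R_k\stackrel{d}{=}Q_k\sim\mathrm{gamma}(k,\beta/c)$, and independence is inherited from the theorem; this completes the argument. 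I expect the only delicate point to be invoking Theorem 4.7.1 in exactly the right generality — combining the mixture-of-projections decomposition with the radial-independence statement and correctly matching the monomial degree $k-1$ to the gamma shape parameter $k$ and the thinning probability to the common eigenvalue $\beta$; the integrals involved are routine Gaussian/gamma computations.
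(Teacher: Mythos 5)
Your argument is correct and rests on exactly the ingredient the paper itself invokes: the paper gives no detailed proof, simply citing Kostlan's result and Theorem 4.7.1 of the Hough et al.\ monograph for the $\beta$-GPP, which is the same radial-independence theorem at the core of your derivation. Your additional steps (orthonormality of the monomials in $L^2(\nu_{\beta,c})$, the observation that all eigenvalues of $K_{\beta,c}$ equal $\beta$, the mixture-of-projections representation, and the gamma$(k,\beta/c)$ identification) are accurate and merely fill in the details the authors leave implicit.
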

From Theorem 1 and Remark 24 in \cite{goldman2010palm}, we know that there exists a version of the GPP $\Phi_c$ such that the Palm measure of $\Phi_c$ is the law of the process obtained by removing from $\Phi_c$ a Gaussian-distributed point and then adding the origin. Thus, we have the following proposition.
\begin{prop} \label{5}
{\rm\textbf{(The Palm measure of the scaled $\beta$-Ginibre point process).}} For a scaled $\beta$-GPP $\Phi_c$, the Palm measure of $\Phi_c$ is the law of the process obtained by adding the origin and deleting the point $X$ if it belongs (which occurs with probability $\beta$) to the process $\Phi_c$, where $|X|^2=Q_1$.
\end{prop}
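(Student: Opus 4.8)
The idea is to realize $\Phi_c$ as an independent $\beta$-thinning of the standard GPP followed by a homothety, and to transport the known Palm description of the standard GPP --- Goldman's Theorem~1 and Remark~24, as quoted above --- through these two operations. Write $\Phi$ for the standard GPP (intensity $1/\pi$); by construction $\Phi_c \overset{d}{=} \sqrt{\beta/c}\,\Phi^{\beta}$, where $\Phi^{\beta}$ is the $\beta$-thinning of $\Phi$. Applied to $\Phi$, the quoted result says that the \emph{reduced} Palm process of $\Phi$ at the origin is $\Phi$ with one point $G$ deleted, where $G$ is complex Gaussian of density $\pi^{-1}e^{-|z|^{2}}$; equivalently there is a coupling $\Phi = \Phi^{!o} + \delta_{G}$, and adding the origin back to $\Phi^{!o}$ recovers the non-reduced Palm process.

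The first step is to show that reduced Palm distributions commute with independent thinning: $(\Phi^{\beta})^{!o} \overset{d}{=} (\Phi^{!o})^{\beta}$. This is a short computation with the Mecke equation. Conditioning on $\Phi$ and on a point $x \in \Phi$, the remainder of $\Phi^{\beta}$ is the $\beta$-thinning of $\Phi - \delta_{x}$, independently of the indicator that $x$ is itself retained; since the intensity measure of $\Phi^{\beta}$ is $\beta$ times that of $\Phi$, the Mecke identity for $\Phi^{\beta}$ reduces to the Mecke identity for $\Phi$ applied to the ``thinned'' test function, which is exactly the asserted identity. Hence $(\Phi^{\beta})^{!o}$ is obtained by $\beta$-thinning $\Phi - \delta_{G}$ --- equivalently, by $\beta$-thinning all of $\Phi$ and then deleting $G$ if it survived. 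The two descriptions agree because $G$ is absent in either case, and this is precisely where the clause ``deleting the point $X$ if it belongs, which occurs with probability $\beta$'' comes from.

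The second step is the covariance of Palm measures under a homothety: if $\Psi = a\Phi$ then $\Psi^{!o} \overset{d}{=} a\,\Phi^{!o}$, again immediate from Mecke via the substitution $y = ax$. Taking $a = \sqrt{\beta/c}$ gives $\Phi_c^{!o} \overset{d}{=} \sqrt{\beta/c}\,(\Phi^{\beta})^{!o}$, i.e.\ $\Phi_c$ with the point $X := \sqrt{\beta/c}\,G$ deleted whenever $G$ survived the thinning (probability $\beta$). It remains to identify the law of $X$: rotation invariance of $G$ passes to $X$, so $\arg X$ is uniform and independent of $|X|$; and $|X|^{2} = (\beta/c)|G|^{2}$ with $|G|^{2}$ exponential of mean $1$, so $|X|^{2} \sim \mathrm{gamma}(1, \beta/c)$, which by (\ref{10}) is the law of $Q_{1}$. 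This agrees with Proposition~\ref{4}, in which $Q_{1}$ is the first gamma variable and contributes a point to $\Phi_c$ with probability $\beta$. Adding the origin back --- harmless since $|X| > 0$ a.s.\ --- turns the reduced Palm process into the Palm process, which is the claim. (If one instead takes the quoted statement directly for $\Phi_c$ rather than for $\Phi$, only this last identification of the deleted point via Proposition~\ref{4} remains to be done.)

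I expect the main obstacle to be the bookkeeping in the thinning step: making rigorous that ``delete $G$, then thin $\Phi$'' and ``thin $\Phi$, then delete $G$ if present'' induce the same law (so that the survival coin of the deleted point is genuinely immaterial), and --- relatedly --- upgrading Proposition~\ref{4} from an identity in law of the \emph{set of squared moduli} to an identity for the full configuration (moduli, arguments, and survival indicators), which is what makes ``the point $X$ with $|X|^{2} = Q_{1}$'' meaningful. A more computational route avoids Goldman's result entirely: the reduced Palm process of a DPP$_{S}(K)$ at $o$ is again a DPP, with kernel $K(y,z) - K(y,o)K(o,z)/K(o,o)$, so here one gets $\beta\bigl(e^{\frac{c}{\beta} y \bar z} - 1\bigr)$ with respect to $\nu_{\beta,c}$; recognizing this as $K_{\beta,c}$ with its constant eigenfunction removed --- i.e.\ the eigen-mode corresponding to $Q_{1}$ in the mixture representation behind Proposition~\ref{4} --- again identifies the reduced Palm process as ``$\Phi_c$ with its $Q_{1}$-point removed''. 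The thinning route, however, imports the least additional machinery.
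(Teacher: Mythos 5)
Your proposal is correct and follows essentially the same route as the paper, which simply invokes Goldman's Theorem~1 and Remark~24 for the Ginibre process and transfers that Palm description to the scaled $\beta$-GPP through independent thinning and the homothety; your write-up merely makes explicit the steps the paper leaves implicit (Palm commuting with thinning and with scaling, and the identification $|X|^2=(\beta/c)|G|^2\sim\mathrm{gamma}(1,\beta/c)=Q_1$). The alternative kernel computation $K(y,z)-K(y,o)K(o,z)/K(o,o)=\beta\bigl(e^{\frac{c}{\beta}y\bar z}-1\bigr)$ you mention is a valid independent check, but it is not needed and is not the paper's argument.
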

From Propositions \ref{4} and \ref{5}, we observe that the Palm distribution of the squared moduli $Q_k$ is closely related to the non-Palm version, the only difference being that $Q_1$ is removed if it is included in $\Xi$.

\subsection{$K$ and $L$ functions for the $\beta$-Ginibre point processes}
\label{subsec:K function of beta-GPP}
\subsubsection{$K$ function}
\label{subsubsec:K function}
The $K$ function is defined as $K(r)\triangleq\frac{1}{\lambda}\mathcal{K}(b(o,r))$ \cite{ripley1976second}, where $\mathcal{K}$ is the reduced second moment measure of the point process, given by \cite{haenggi2012stochastic}
\begin{equation} \label{7}
\mathcal{K}(B)=\frac{1}{\lambda}\int_{B}\varrho^{(2)}(u)du,
\end{equation}
and $b(o,r)$ is the ball of radius $r$ centered at the origin $o$, so $K'(r)dr=\frac{2\pi}{\lambda}\mathcal{K}(rdr)$. Therefore,
\ifCLASSOPTIONonecolumn
\begin{equation}
K(r)=\frac{2\pi}{\lambda^2}\int_0^r\frac{c^2}{\pi^2}(1-e^{-\frac{c}{\beta}u^2})udu=\pi r^2-\frac{\beta\pi}{c}(1-e^{-\frac{c}{\beta}r^2}).
\end{equation}
\else
\begin{eqnarray}
K(r)\!\!\!\!&=&\!\!\!\!\frac{2\pi}{\lambda^2}\int_0^r\frac{c^2}{\pi^2}(1-e^{-\frac{c}{\beta}u^2})udu \nonumber \\
\!\!\!\!&=&\!\!\!\!\pi r^2-\frac{\beta\pi}{c}(1-e^{-\frac{c}{\beta}r^2}).
\end{eqnarray}
\fi
It is easily verified that $K(r)\rightarrow \pi r^2$ as $\beta\rightarrow 0$, which is the $K$ function of the PPPs.

Although the points of a GPP exhibit repulsion, there is no hard restriction about the distance between any two points; therefore the GPP is a soft-core process. In contrast, for a hard-core process, points are strictly forbidden to be closer than a certain minimum distance. In order to compare the properties of hard-core and the GPP, we also present the $K$ functions for the Mat\'ern hard-core processes (MHCP) of type I and type II. The $K$ function for the MHCP of type I with minimum distance $\delta$ has been given by \cite{haenggi2011mean}
\begin{equation} \label{14}
K_{\rm I}(r)=2\pi\frac{\lambda_p^2}{\lambda_{\rm I}^2}\int_0^ruk_{\rm I}(u)du,
\end{equation}
where
\begin{equation}
k_{\rm I}(u)=\left\{ \begin{matrix}
   {0, ~~~~~~~~~~~~~~~~\quad u<\delta}  \\
   {\exp(-\lambda_pV_{\delta}(u)), \quad u\geq\delta}  \\
\end{matrix} \right.
\end{equation}
is the probability that two points at distance $u$ are both retained, $\lambda_p$ is the intensity of the stationary parent PPP,
and $\lambda_{\rm I}=\lambda_p\exp(-\lambda_p\pi\delta^2)$. $V_{\delta}(u)$ is the area of the union of two disks of radius $\delta$ whose centers are separated
by $u$, given by
\begin{equation}
\!V_{\delta}(u)\!=\!2\pi\delta^2\!-\!2\delta^2\arccos\left(\frac{u}{2\delta}\right)\!+\!u\sqrt{\delta^2\!-\!\frac{u^2}{4}},~0\leq u\leq 2\delta.
\end{equation}
For $u>2\delta$, $V_{\delta}(u)=2\pi\delta^2$. When $r>2\delta$ and $\lambda_{\rm I}=\lambda$,
\ifCLASSOPTIONonecolumn
\begin{equation}
\lim\limits_{r\rightarrow\infty}K_{\rm GPP}(r)-K_{\rm I}(r)=4\pi\delta^2-\frac{1}{\lambda_p\exp(-\pi\lambda_p\delta^2)}-2\pi\frac{\lambda_p^2}{\lambda^2}\int_{\delta}^{2\delta}k_{\rm I}(u)udu.
\end{equation}
\else
\begin{multline} \label{8}
\lim\limits_{r\rightarrow\infty}K_{\rm GPP}(r)-K_{\rm I}(r) \\
=4\pi\delta^2-\frac{1}{\lambda_p\exp(-\pi\lambda_p\delta^2)}-2\pi\frac{\lambda_p^2}{\lambda^2}\int_{\delta}^{2\delta}k_{\rm I}(u)udu.
\end{multline}
\fi
Since $\frac{\lambda_p^2}{\lambda^2}k_{\rm I}(u)$ is monotonically increasing in $\lambda_p$ for all $\delta\leq u <2\delta$, $\frac{\lambda_p^2}{\lambda^2}k_{\rm I}(u)\geq 1$. Thus,
\begin{equation} \label{17}
\lim\limits_{r\rightarrow\infty}K_{\rm GPP}(r)-K_{\rm I}(r)\leq\frac{\pi\lambda_p\delta^2-\exp(\pi\lambda_p\delta^2)}{\lambda_p}<0.
\end{equation}

The $K$ function for the MHCP of type II can be expressed as \cite{haenggi2011mean}
\begin{equation}
K_{\rm II}(r) = 2\pi\frac{\lambda_p^2}{\lambda_{\rm II}^2} \int_0^{r}k_{\rm II}(u)udu,
\end{equation}
where
\begin{equation} \label{27}
k_{\rm II}(u)\!=\!\left\{
             \begin{array}{lcl}
             {\!\!\!0} \!\!&\text{if} &\!\!\!u\!<\!\delta \\
             {\!\!\!\frac{2V_{\delta}(u)(1-e^{-\lambda_p\pi\delta^2})-2\pi\delta^2(1-e^{-\lambda_p V_{\delta}(u)})}{\lambda_p^2\pi\delta^2 V_{\delta}(u)(V_{\delta}(u)-\pi\delta^2)}} \!\!&\text{if} &\!\!\!\delta\!\leq\!u\!\leq\!2\delta  \\
             {\!\!\!\frac{\lambda_{\rm II}^2}{\lambda_p^2}} \!\!&\text{if} &\!\!\!u\!>\!2\delta,
             \end{array}
        \right.
\end{equation}
with the intensity $\lambda_{\rm II}=\frac{1-\exp(-\lambda_p\pi\delta^2)}{\pi\delta^2}$.
When $r>2\delta$ and $\lambda_{\rm II}=\lambda$, we still have $\frac{\lambda_p^2}{\lambda^2}k_{\rm II}(u)\geq 1$ as $\lambda_p\rightarrow 0$.
Therefore,
\begin{equation} \label{20}
\!\!\!\!\lim\limits_{r\rightarrow\infty}\!\!K_{\rm GPP}(r)-K_{\rm II}(r)\!\leq\!-\pi\delta^2\frac{\exp(-\pi\lambda_p\delta^2)}{1-\exp(-\pi\lambda_p\delta^2)}\!<\!0.
\end{equation}
From (\ref{17}) and (\ref{20}), it can be concluded that the MHCP is always less regular than the GPP with the same intensity as $r\rightarrow\infty$.
Figure \ref{fig:K_r} illustrates the $K$ function of the scaled $1$-GPP with $c=0.2$, in comparison with the PPP, where $K(r)=\pi r^2$, and the MHCP with type I and II for $\delta=5/4$. It can be seen that as soon as $r\gtrapprox\delta$, the GPP is a more regular point process than the other two point processes.
\ifCLASSOPTIONonecolumn
\else
\begin{figure}
    \centering
    \includegraphics[width=0.5\textwidth]{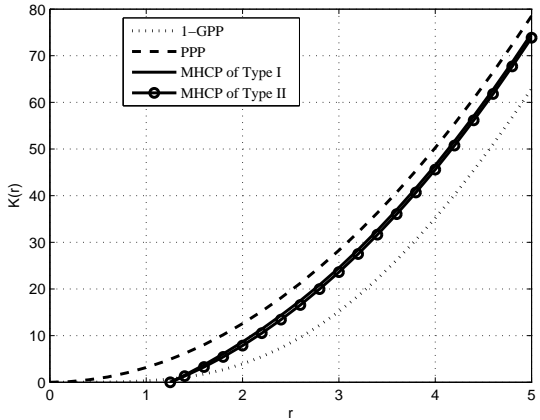}
    \caption{The $K$ function of the scaled $1$-GPP with $c=0.2$, in comparison with the Poisson point process (PPP), where $K(r)=\pi r^2$, and the MHCP with type I and II for $\delta=5/4$.}
    \label{fig:K_r}
\end{figure}
\fi
\subsubsection{L function}
\label{subsubsec:L function}
The L function is defined as $L(r)\triangleq\sqrt{\frac{K(r)}{\pi}}$, which is sometimes preferred to the $K$ function since $L(r)=r$ for the uniform PPP. For the $\beta$-GPP, $L(r)=\sqrt{r^2-\frac{\beta}{c}\Big(1-e^{-\frac{c}{\beta}r^2}\Big)}$. In order to highlight the soft-core properties of the $\beta$-GPP more clearly, we use the modified $L$ function, which is defined as $\widetilde{L}\triangleq\frac{1}{r}\sqrt{\frac{K(r)}{\pi}}-1$. For the PPP, $\widetilde{L}\equiv0$, while for the $\beta$-GPP, $\widetilde{L}(r)=\sqrt{1-\frac{\beta}{cr^2}\Big(1-e^{-\frac{c}{\beta}r^2}\Big)}-1$. Also, $\widetilde{L}(r)\rightarrow-1, \forall\beta>0$, as $r\rightarrow0$.
Figure \ref{fig:L-K_thinned} illustrates the $\widetilde{L}$ function of the $\beta$-GPP for different $\beta$. It can be seen that the $\widetilde{L}$ function of the $\beta$-GPP lies between the one of the GPP and that of the PPP, as expected.
\ifCLASSOPTIONonecolumn
\begin{figure}
  \centering
  \begin{minipage}[t]{0.5\textwidth}
    \centering
    \includegraphics[width=1\textwidth]{3_K_function_Compare.eps}
  \end{minipage}%
  \begin{minipage}[t]{0.5\textwidth}
    \centering
    \includegraphics[width=1\textwidth]{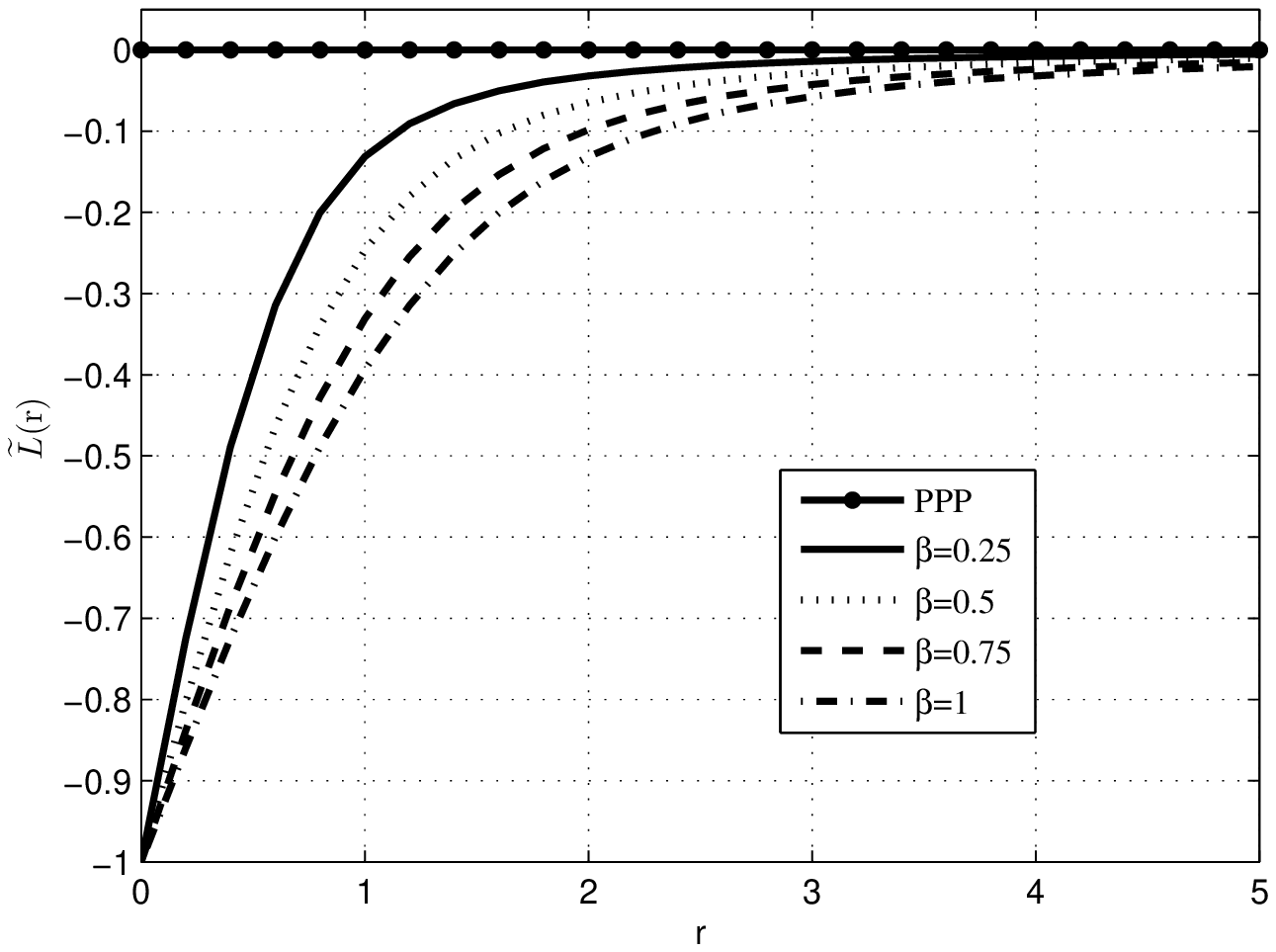}
  \end{minipage}\\[-20pt]
  \begin{minipage}[t]{0.5\textwidth}
  \caption{The $K$ function of the scaled $1$-GPP with $c=0.2$, in comparison with the Poisson point process (PPP), where $K(r)=\pi r^2$, and the MHCP with type I and II for $\delta=5/4$.}
  \label{fig:K_r}
  \end{minipage}%
  \begin{minipage}[t]{0.5\textwidth}
  \caption{The $\widetilde{L}$ function of the $\beta$-GPP.}
  \label{fig:L-K_thinned}
  \end{minipage}%
\end{figure}

\else
\begin{figure}
    \centering
    \includegraphics[width=0.5\textwidth]{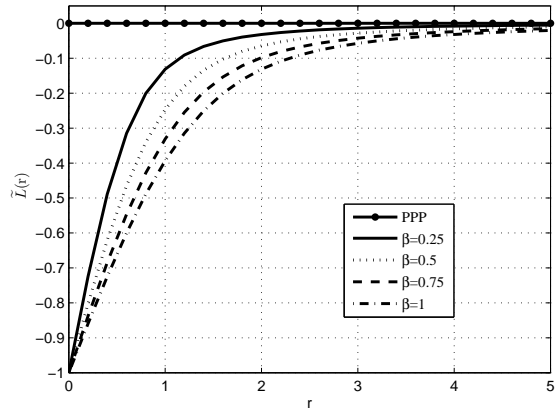}
    \caption{The $\widetilde{L}$ function of the $\beta$-GPP.}
    \label{fig:L-K_thinned}
\end{figure}
\fi

\subsection{Important distances for the $\beta$-Ginibre point process}
\label{subsec:distances}
\subsubsection{Contact distribution function or empty space function}
\label{subsubsec:contact}
The contact distance at any location $u$ of a point process $\Phi$ is $\|u-\Phi\|$ \cite[Def.~2.37]{haenggi2012stochastic}. If $\Phi_c$ is a scaled $\beta$-GPP, then the contact distribution function or empty space function $F$ is the cdf of $\|u-\Phi_c\|$:
\begin{eqnarray}
F(r)&\triangleq &\mathbb{P}(\|u-\Phi_c\|\leq r) \nonumber \\
&\overset{(a)}{=}&1-\mathbb{P}(\|o-\Phi_c\|>r) \nonumber \\
&\overset{(b)}{=}&1-\prod_{k=1}^{\infty}(\beta\mathbb{P}(Q_k\geq r^2)+1-\beta) \nonumber \\
&=&1-\prod\limits_{k=1}^{\infty}\left(1-\beta\widetilde{\gamma}\Big(k,\frac{c}{\beta}r^2\Big)\right),
\end{eqnarray}
where $(a)$ follows since the GPP is motion-invariant and thus $F(r)$ does not depend on $u$, $(b)$ is based on Proposition \ref{4}, and $\widetilde{\gamma}(a,x)=\frac{\int_0^xe^{-u}u^{a-1}du}{\Gamma(a)}$ is the normalized lower incomplete gamma function\footnote{This is the {\tt gammainc} function in Matlab.}.

\subsubsection{Nearest-neighbor distance and distribution function}
\label{subsubsec:nearest-neighbor}
The nearest-neighbor distance is the distance from a point $x\in\Phi$ to its nearest neighbor, which is given by $\|x-\Phi\setminus\{x\}\|$ \cite[Def.~2.39]{haenggi2012stochastic}. And the corresponding distribution $G^x(r)=\mathbb{P}(\|x-\Phi\setminus\{x\}\|\leq r)$ is the nearest-neighbor distance distribution function. If $\Phi_c$ is a scaled $\beta$-GPP, due to the stationarity of the GPP, we may condition the point process to have a point at the origin, i.e., $o\in\Phi_c$. According to Propositions \ref{4} and \ref{5}, we have
\begin{eqnarray}
G(r) &=& 1-\prod_{k=2}^{\infty}(\beta\mathbb{P}(Q_k\geq r^2)+1-\beta) \nonumber \\
&=&1-\prod_{k=2}^{\infty}\left(\beta\left(1-\widetilde{\gamma}\Big(k,\frac{c}{\beta}r^2\Big)\right)+1-\beta\right) \nonumber \\
&=&1-\prod_{k=2}^{\infty}\left(1-\beta\widetilde{\gamma}\Big(k,\frac{c}{\beta}r^2\Big)\right),
\end{eqnarray}
where the distribution of $Q_k$ is given in (\ref{10}).

\subsubsection{The $J$ function}
\label{subsec:J function}
The $J$ function is a useful measure of how close a process is to a PPP, which is defined as the ratio of the complementary nearest-neighbor distance and contact distributions \cite{lieshout1996nonparametric}. For the $\beta$-GPP, the $J$ function is
\begin{eqnarray} \label{21}
J(r)&\triangleq&\frac{1-G(r)}{1-F(r)}\nonumber \\
&=&\frac{1}{1-\beta+\beta e^{-\frac{c}{\beta}r^2}}.
\end{eqnarray}
It is easily verified that $J(r)\rightarrow 1$ as $\beta\rightarrow 0$, which is the $J$ function of the PPP. Figure \ref{fig:J_function_theory} gives the $J$ function of three $\beta$-GPPs. It can be seen that since the $\beta$-GPP is a soft-core point process where nodes exhibit expulsion, the $J$ function of the $\beta$-GPP is always larger than 1. Also, the increasing regularity as $\beta\rightarrow1$ is apparent from the increase in the $J$ function.

\begin{figure}
    \centering
    \includegraphics[width=0.5\textwidth]{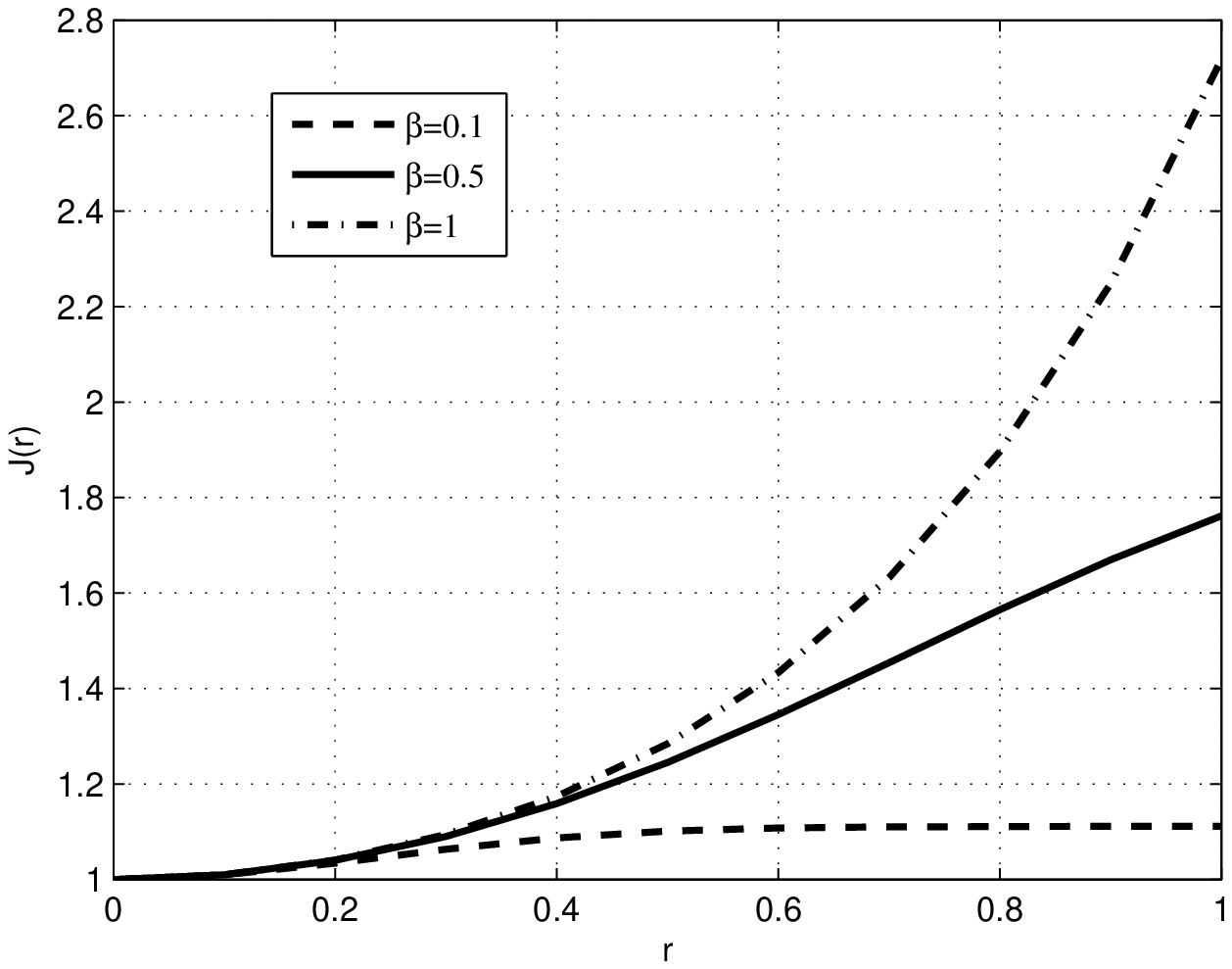}
    \caption{The J function of three $\beta$-GPPs for $c=1$.}
    \label{fig:J_function_theory}
\end{figure}

\section{Analysis of the mean and variance of the interference}
\label{sec:Analysis of the mean and variance of interference}
In this section, we provide two approaches to derive the mean and variance of the interference for the wireless networks whose nodes are distributed as a scaled $\beta$-GPP $\Phi_c=\{X_1,X_2,...\}\subset\mathbb{C}$ with intensity $\lambda=c/\pi$; one is with the aid of the Palm measure of $\Phi_c$, and the other with the reduced second moment measure of $\Phi_c$. We also provide approximations of the interference distribution using three known probability density functions (PDFs), the gamma distribution, the inverse Gaussian distribution, and the inverse gamma distribution.

Assuming all nodes in $\Phi_c$ are interfering transmitters, the interference at the origin is defined as $I\triangleq\sum_{x\in\Phi_c}h_x\ell(x)$, where $h_x$ is the power fading coefficient associated with node $x$ and $\ell(x)$ is the path loss function. It is assumed that $h_x$ follows the exponential distribution and $\mathbb{E}(h_x)=1$ for all $x\in\Phi_c$. By Campbell's theorem \cite{haenggi2012stochastic}, the mean interference $\mathbb{E}(I)$ is the same for all stationary point processes of the same intensity.
Rather than measuring interference at an arbitrary location, we focus on the interference at the location of a node $x\in\Phi_c$ but without considering that node's contribution to the interference. The statistics of the resulting interference correspond to those of the typical point at $o$ when conditioning on $o\in\Phi_c$. Consequently, the mean interference is $\mathbb{E}_o^!(I)$, where $\mathbb{E}_o^!$ is the expectation with respect to the reduced Palm distribution.

\subsection{Approach I - The Palm Measure}
From Proposition \ref{4}, we know that the set $\{|X_i|^2\}_{i\in\mathbb{N}}$ has the same distribution as the set obtained by retaining each $Q_k$ with probability $\beta$ independently of everything else. Therefore, the interference from the $i$th node is $Q_i^{-\alpha/2}T_{i}$, where $\{T_{i}\}$ is a family of independent indicators with $\mathbb{E}T_i=\beta$, $T_i\in\{0,1\}$. Besides, from Proposition \ref{5}, we know that the Palm distribution is obtained by removing $\sqrt{Q_1}$ if $Q_1$ is included in $\Xi$. The following theorem gives the mean and the variance of the interference in $\beta$-Ginibre wireless networks based on the Palm measure of $\Phi_c$.
\begin{myTheo} \label{1}
In a $\beta$-Ginibre wireless network with the bounded power-law path-loss $\ell(r)=(\max\{r_0,r\})^{-\alpha}$, the mean interference is
\ifCLASSOPTIONonecolumn
\begin{equation} \label{40}
\mathbb{E}_o^!(I)=cr_0^{2-\alpha}\frac{\alpha}{\alpha-2} + \beta r_0^{-\alpha}(e^{-\frac{c}{\beta}r_0^2}-1) -c^{\frac{\alpha}{2}}\beta^{1-\frac{\alpha}{2}}\Gamma\left(1-\frac{\alpha}{2},\frac{c}{\beta}r_0^2\right),\quad\alpha>2,
\end{equation}
\else
\begin{multline} \label{40}
\mathbb{E}_o^!(I)=cr_0^{2-\alpha}\frac{\alpha}{\alpha-2} + \beta r_0^{-\alpha}(e^{-\frac{c}{\beta}r_0^2}-1) \\ -c^{\frac{\alpha}{2}}\beta^{1-\frac{\alpha}{2}}\Gamma\left(1-\frac{\alpha}{2},\frac{c}{\beta}r_0^2\right),\quad\alpha>2,
\end{multline}
\fi
and $\mathbb{E}_o^!(I)\rightarrow a+bc$ as $c\rightarrow\infty$, for $a=-\beta r_0^{-\alpha}$, $b=\frac{\alpha}{\alpha-2}r_0^{2-\alpha}$.
The variance of the interference $V_o^!(I)\triangleq\mathbb{E}_o^!(I^2)-(\mathbb{E}_o^!(I))^2$ is given by
\ifCLASSOPTIONonecolumn
\begin{multline} \label{41}
V_o^!(I)=2\frac{c\alpha r_0^{2-2\alpha}}{\alpha-1}-2\beta r_0^{-2\alpha}(1-e^{-\frac{c}{\beta}r_0^2})-2c^{\alpha}\beta^{1-\alpha}\Gamma\Big(1-\alpha,\frac{c}{\beta}r_0^2\Big) \\
-\beta^2\sum_{k=2}^{\infty}\left(r_0^{-\alpha}\widetilde{\gamma}\Big(k,\frac{c}{\beta}r_0^2\Big)+\Big(\frac{c}{\beta}\Big)^{\frac{\alpha}{2}}\frac{\Gamma(k-\frac{\alpha}{2},\frac{c}{\beta}r_0^2)}{\Gamma(k)}\right)^2,~~~\alpha>1.
\end{multline}
\else
\begin{multline} \label{41}
\!\!\!\!\!V_o^!(I)\!=\!2\frac{c\alpha r_0^{2-2\alpha}}{\alpha-1}-2\beta \frac{1\!-\!e^{-\frac{c}{\beta}r_0^2}}{r_0^{2\alpha}}-2c^{\alpha}\beta^{1-\alpha}\Gamma\Big(1-\alpha,\frac{c}{\beta}r_0^2\Big) \\
-\beta^2\!\sum_{k=2}^{\infty}\!\left(\!r_0^{-\alpha}\widetilde{\gamma}\Big(k,\frac{c}{\beta}r_0^2\Big)\!+\!\Big(\frac{c}{\beta}\Big)^{\frac{\alpha}{2}}\!\frac{\Gamma(k-\frac{\alpha}{2},\frac{c}{\beta}r_0^2)}{\Gamma(k)}\!\right)^2\!,~\alpha\!>\!1.
\end{multline}
\fi
\end{myTheo}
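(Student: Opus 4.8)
The plan is to push everything through the explicit law of the squared moduli under the reduced Palm distribution, and then to reduce the two moments to a pair of series of incomplete gamma functions that can be summed in closed form. Combining Propositions \ref{4} and \ref{5}, under the reduced Palm distribution of $\Phi_c$ the points have squared moduli $\{Q_k: k\ge 2,\ T_k=1\}$, where $Q_k\sim\mathrm{gamma}(k,\beta/c)$ as in (\ref{10}), the $T_k$ are i.i.d.\ Bernoulli$(\beta)$ indicators, and the $Q_k$, $T_k$ and the fading variables $h_k\sim\exp(1)$ are mutually independent (the point $Q_1$, when present, is precisely the one deleted by the Palm operation, and the origin added by the Palm operation is excluded from the reduced sum). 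Hence I would write
\[
I=\sum_{k=2}^{\infty}h_k\,T_k\,\ell\big(\sqrt{Q_k}\big),\qquad \ell\big(\sqrt{Q_k}\big)=r_0^{-\alpha}\mathbf 1\{Q_k<r_0^2\}+Q_k^{-\alpha/2}\mathbf 1\{Q_k\ge r_0^2\},
\]
and, by Tonelli and independence, $\mathbb E_o^!(I)=\beta\sum_{k\ge2}\mathbb E\,\ell(\sqrt{Q_k})$. A direct integration against the density (\ref{10}), with the substitution $u=\tfrac c\beta q$, then gives the single building block
\[
\mathbb E\,\ell\big(\sqrt{Q_k}\big)=r_0^{-\alpha}\widetilde\gamma\Big(k,\tfrac c\beta r_0^2\Big)+\Big(\tfrac c\beta\Big)^{\alpha/2}\frac{\Gamma\big(k-\tfrac\alpha2,\tfrac c\beta r_0^2\big)}{\Gamma(k)},
\]
where $\Gamma(s,x)=\int_x^\infty t^{s-1}e^{-t}\,dt$.

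Next I would sum the two resulting series. For the $\widetilde\gamma$-series I use the elementary identity $\sum_{k\ge1}\widetilde\gamma(k,x)=x$ (immediate from $\widetilde\gamma(k,x)=1-e^{-x}\sum_{j<k}x^j/j!$ and a telescoping of the partial sums), so $\sum_{k\ge2}\widetilde\gamma(k,x)=x-1+e^{-x}$. For the upper-incomplete-gamma series the key step is the generating-function identity
\[
\sum_{k=0}^{\infty}\frac{\Gamma(s+k,x)}{k!}=\int_x^\infty t^{s-1}\Big(\sum_{k\ge0}\tfrac{t^k}{k!}\Big)e^{-t}\,dt=\int_x^\infty t^{s-1}\,dt=-\frac{x^{s}}{s}\qquad(s<0),
\]
the interchange of sum and integral being justified by Tonelli. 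Applying it with $s=1-\tfrac\alpha2$ (legitimate exactly because $\alpha>2$) after re-indexing $k\mapsto k-1$ evaluates $\big(\tfrac c\beta\big)^{\alpha/2}\sum_{k\ge2}\Gamma(k-\tfrac\alpha2,x)/\Gamma(k)$ in closed form; collecting terms with $x=\tfrac c\beta r_0^2$ and using $1+\tfrac{2}{\alpha-2}=\tfrac{\alpha}{\alpha-2}$ produces (\ref{40}). The limit as $c\to\infty$ follows since $e^{-\frac c\beta r_0^2}\to0$ and $c^{\alpha/2}\beta^{1-\alpha/2}\Gamma(1-\tfrac\alpha2,\tfrac c\beta r_0^2)\sim\beta^{\alpha/2}r_0^{-\alpha}e^{-\frac c\beta r_0^2}\to0$, leaving $a+bc$ with $a=-\beta r_0^{-\alpha}$ and $b=\tfrac{\alpha}{\alpha-2}r_0^{2-\alpha}$.

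For the variance I would square the series for $I$ and separate diagonal from off-diagonal terms. Using $T_k^2=T_k$, $\mathbb E[h_k^2]=2$, and independence, $\mathbb E_o^!(I^2)=2\beta\sum_{k\ge2}\mathbb E[\ell(\sqrt{Q_k})^2]+\beta^2\big(\sum_{k\ge2}\mathbb E\,\ell(\sqrt{Q_k})\big)^2-\beta^2\sum_{k\ge2}\big(\mathbb E\,\ell(\sqrt{Q_k})\big)^2$; subtracting $(\mathbb E_o^!(I))^2=\beta^2\big(\sum_{k\ge2}\mathbb E\,\ell(\sqrt{Q_k})\big)^2$ cancels the square-of-sum term and leaves
\[
V_o^!(I)=2\beta\sum_{k\ge2}\mathbb E\big[\ell(\sqrt{Q_k})^2\big]-\beta^2\sum_{k\ge2}\big(\mathbb E\,\ell(\sqrt{Q_k})\big)^2 .
\]
The second sum is already the one displayed in (\ref{41}); for the first I repeat the computation above with $\alpha$ replaced by $2\alpha$, so that the generating-function identity is now invoked with $s=1-\alpha$, which converges precisely when $\alpha>1$ --- this is where the hypothesis $\alpha>1$ enters. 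Finally $2cr_0^{2-2\alpha}+\tfrac{2cr_0^{2-2\alpha}}{\alpha-1}=\tfrac{2c\alpha r_0^{2-2\alpha}}{\alpha-1}$ assembles (\ref{41}).

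The step I expect to be the main obstacle --- really the only non-routine observation --- is the identity $\sum_{k\ge0}\Gamma(s+k,x)/k!=-x^s/s$ for $s<0$: recognizing that the family $\{\Gamma(s+k,x)\}_k$ has this exponential generating function via its integral representation is what collapses the infinite sums, simultaneously explains the $\tfrac{\alpha}{\alpha-2}$ and $\tfrac{\alpha}{\alpha-1}$ prefactors, and pins down the exact ranges $\alpha>2$ and $\alpha>1$ of validity. Everything else is bookkeeping with the gamma density and with $\widetilde\gamma$ and $\Gamma(\cdot,\cdot)$.
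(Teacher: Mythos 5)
Your proposal is correct and takes essentially the same route as the paper's Appendix A (Approach I): you represent the interference under the reduced Palm measure as $\sum_{k\ge 2}h_k T_k(\max\{r_0^2,Q_k\})^{-\alpha/2}$ with independent gamma-distributed squared moduli and Bernoulli retention marks, and compute both moments termwise, the variance reducing to $2\beta\sum_{k\ge2}\mathbb{E}[\ell(\sqrt{Q_k})^2]-\beta^2\sum_{k\ge2}\big(\mathbb{E}\,\ell(\sqrt{Q_k})\big)^2$ exactly as in the paper. The only difference is the order of evaluation: the paper first collapses the sum of gamma densities, $\beta\sum_{k\ge2}f_{Q_k}(q)=c\big(1-e^{-\frac{c}{\beta}q}\big)$, and then computes one integral, whereas you integrate each term and then sum the incomplete-gamma series via your generating-function identity --- the same Tonelli interchange done in the opposite order, giving the identical closed forms and the same validity ranges $\alpha>2$ and $\alpha>1$ (your asymptotic constant for the incomplete-gamma term should be $\beta r_0^{-\alpha}e^{-\frac{c}{\beta}r_0^2}$ rather than $\beta^{\alpha/2}r_0^{-\alpha}e^{-\frac{c}{\beta}r_0^2}$, but the term vanishes either way, so the affine limit stands).
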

The proof is provided in Appendix \ref{sect:Appendix A}. It can be shown that the variance is finite for any $r_0>0$ if $\alpha>1$. This is intuitive, since in that case, the variance is finite also for the PPP \cite[Sec.~5.1]{haenggi2012stochastic}. For two special cases, we have the following corollaries:
\begin{cor}
For $\ell(r)=r^{-\alpha}$, the mean interference is
\begin{equation}
\mathbb{E}_o^!(I)=-c^{\frac{\alpha}{2}}\beta^{1-\frac{\alpha}{2}}\Gamma\Big(1-\frac{\alpha}{2}\Big),\quad 2<\alpha<4,~\beta>0,
\end{equation}
and $\mathbb{E}_o^!(I)=\Theta(c^{\alpha/2})$, as $c\rightarrow\infty$.

The variance of the interference is upper bounded as
\begin{equation}
V_o^!(I)\leq-c^{\alpha}\beta^{1-\alpha}\!\left(2\Gamma(1-\alpha)+\beta\sum\limits_{k=2}^{\infty}k^{-\alpha}\right),~1<\alpha<2.
\end{equation}
\end{cor}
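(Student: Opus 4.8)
The plan is to obtain the corollary from Theorem~\ref{1} by sending $r_0\to 0$, since $\ell(r)=r^{-\alpha}$ is exactly the $r_0\to0$ limit of the bounded path loss $(\max\{r_0,r\})^{-\alpha}$. Concretely, in the mean formula (\ref{40}) I would insert the small-argument expansion of the upper incomplete gamma function, $\Gamma(s,x)=\Gamma(s)-x^{s}/s+O(x^{s+1})$, with $s=1-\alpha/2$ and $x=\tfrac{c}{\beta}r_0^{2}$; together with $e^{-cr_0^2/\beta}-1=-\tfrac{c}{\beta}r_0^2+O(r_0^4)$ this exhibits the three $r_0^{2-\alpha}$-singular contributions in (\ref{40}), whose coefficients sum to zero, leaving the finite limit $-c^{\alpha/2}\beta^{1-\alpha/2}\Gamma(1-\alpha/2)$; the leftover $O(r_0^{4-\alpha})$ terms vanish precisely because $\alpha<4$. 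The scaling $\mathbb{E}_o^!(I)=\Theta(c^{\alpha/2})$ is then immediate from the closed form, and positivity is consistent since $\Gamma$ is negative on $(-1,0)$.

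An alternative, and arguably cleaner, route works directly from the Palm picture: by Propositions~\ref{4} and~\ref{5} the reduced Palm process has squared moduli $\{Q_k\}_{k\ge2}$, each retained independently with probability $\beta$, so $\mathbb{E}_o^!(I)=\beta\sum_{k\ge2}\mathbb{E}[Q_k^{-\alpha/2}]$ with $\mathbb{E}[Q_k^{-\alpha/2}]=(c/\beta)^{\alpha/2}\Gamma(k-\tfrac{\alpha}{2})/\Gamma(k)$, a gamma integral finite for every $k\ge2$ iff $\alpha<4$. Writing $s=1-\alpha/2\in(-1,0)$ and reindexing, the remaining sum is $\sum_{j\ge1}\Gamma(j+s)/j!$; recognizing $\Gamma(j+s)/(j!\,\Gamma(s))=\binom{j+s-1}{j}$ and using $\sum_{j\ge0}\binom{j+s-1}{j}x^{j}=(1-x)^{-s}$ together with Abel's theorem at $x=1$ gives $\sum_{j\ge0}\Gamma(j+s)/j!=0$, hence $\sum_{j\ge1}\Gamma(j+s)/j!=-\Gamma(s)$, which reproduces the stated formula.

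For the variance I would use independence across $k$ to write $V_o^!(I)=\sum_{k\ge2}\mathrm{Var}\big(h_kQ_k^{-\alpha/2}T_k\big)$ and evaluate each summand from $\mathbb{E}[h_k^2]=2$, $\mathbb{E}[T_k^2]=\mathbb{E}[T_k]=\beta$, $\mathbb{E}[Q_k^{-\alpha}]=(c/\beta)^{\alpha}\Gamma(k-\alpha)/\Gamma(k)$ (finite for $k\ge2$ iff $\alpha<2$, which is why the range shrinks to $1<\alpha<2$), and $\mathbb{E}[Q_k^{-\alpha/2}]$ as above. The same summation identity with $s=1-\alpha$ gives $\sum_{k\ge2}\Gamma(k-\alpha)/\Gamma(k)=-\Gamma(1-\alpha)$, yielding the exact value
\begin{equation}
V_o^!(I)=-c^{\alpha}\beta^{1-\alpha}\Big(2\Gamma(1-\alpha)+\beta\sum_{k=2}^{\infty}\big(\Gamma(k-\tfrac{\alpha}{2})/\Gamma(k)\big)^{2}\Big).
\end{equation}
To reach the stated bound I would invoke Wendel's inequality $\Gamma(x+a)/\Gamma(x)\le x^{a}$ for $0\le a\le1$, $x>0$, with $a=\alpha/2$ and $x=k-\alpha/2>0$: this gives $\Gamma(k-\tfrac{\alpha}{2})/\Gamma(k)\ge(k-\tfrac{\alpha}{2})^{-\alpha/2}\ge k^{-\alpha/2}$, so $\big(\Gamma(k-\tfrac{\alpha}{2})/\Gamma(k)\big)^{2}\ge k^{-\alpha}$ termwise; since the prefactor $-c^{\alpha}\beta^{1-\alpha}$ is negative, replacing the series by $\sum_{k\ge2}k^{-\alpha}$ enlarges the bracket and hence yields an upper bound on $V_o^!(I)$, which is the claim.

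The gamma integrals are routine; the points that need care are the evaluation of the two infinite series at the edge of their region of validity --- one needs the tail estimate $\Gamma(j+s)/j!=O(j^{s-1})$ with $s-1<-1$ for absolute convergence, and Tonelli/dominated convergence to interchange expectation and summation in $I=\sum_k h_kQ_k^{-\alpha/2}T_k$ --- and, if one follows the $r_0\to0$ route instead, the analytic continuation of $\Gamma(s,x)$ to $s<0$ and the verification that only the $r_0^{2-\alpha}$ singular terms survive and exactly cancel. I expect the boundary evaluation $\sum_{j\ge0}\Gamma(j+s)/j!=0$ (equivalently, the cancellation of the singular terms in (\ref{40})--(\ref{41})) to be the main thing to pin down; the rest is bookkeeping.
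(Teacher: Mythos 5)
Your proposal is correct and takes essentially the same route as the paper: the reduced-Palm/gamma representation $I=\sum_{k\ge2}h_kQ_k^{-\alpha/2}T_k$ from Propositions \ref{4} and \ref{5}, term-by-term gamma moments for the mean, and the same variance decomposition $V_o^!(I)=\sum_{k\ge2}\bigl(2\beta\,\mathbb{E}(Q_k^{-\alpha})-\beta^2\mathbb{E}^2(Q_k^{-\alpha/2})\bigr)$ with the squared-mean term lower-bounded to produce the stated upper bound. The only differences are cosmetic: the paper interchanges sum and integral and evaluates $c\int_0^\infty q^{-\alpha/2}(1-e^{-cq/\beta})\,dq$ via a Gradshteyn--Ryzhik formula where you sum $\sum_{j\ge1}\Gamma(j+s)/j!=-\Gamma(s)$ through the binomial series and Abel's theorem, and it obtains $\Gamma(k-\tfrac{\alpha}{2})/\Gamma(k)\ge k^{-\alpha/2}$ from Jensen's inequality $\mathbb{E}(Q_k^{-\alpha/2})\ge(\mathbb{E}Q_k)^{-\alpha/2}$ rather than from Wendel's inequality, which here amounts to the same bound.
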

\begin{proof}
When $\ell(r)=r^{-\alpha}$, the mean interference is given by
\begin{eqnarray}
\mathbb{E}_o^!(I)\!\!\!\!&=&\!\!\!\!\sum_{k=2}^{\infty}\mathbb{E}\left(Q_k^{-\alpha/2}T_k\right)\nonumber \\
\!\!\!\!&=&\!\!\!\!\beta\sum\limits_{k=2}^{\infty}\int_0^{\infty}q^{-\alpha/2}\frac{(c/\beta)^k}{\Gamma(k)}q^{k-1}e^{-\frac{c}{\beta}q}dq \nonumber \\
\!\!\!\!&=&\!\!\!\!c\int_{0}^{\infty}q^{-\frac{\alpha}{2}}(1-e^{-\frac{c}{\beta}q})dq \nonumber \\
\!\!\!\!&\overset{(a)}{=}&\!\!\!\!c^{\frac{\alpha}{2}}\beta^{1-\frac{\alpha}{2}}\Gamma(1-\frac{\alpha}{2}),~~ 2<\alpha<4,
\end{eqnarray}
where $(a)$ follows from $\int_0^\infty x^{\nu-1}[1-\exp(-\mu x^p)]dx=-\frac{1}{|p|}\mu^{-\frac{\nu}{p}}\Gamma(\frac{\nu}{p})$, (for $\Re e(\mu)>0$ and $-p<\Re e(\nu)<0$ for $p>0$, $0<\Re e(\nu)<-p$ for $p<0$) \cite[Eq.~3.478(2)]{gradshteyn1994tables}.

The variance of the interference can be derived as
\ifCLASSOPTIONonecolumn
\begin{eqnarray}\label{49}
V_o^!(I)\!\!\!\!\!&=&\!\!\!\!\!\sum_{k=2}^{\infty}\beta\mathbb{E}(h_k^2)\mathbb{E}(Q_k^{-\alpha})-\beta^2\mathbb{E}^2\Big(Q_k^{-\alpha/2}h_k\Big) \nonumber \\
\!\!\!\!\!&=&\!\!\!\!\!2c\int_0^{\infty}q^{-\alpha}(1-e^{-\frac{c}{\beta}q})dq -\beta^2\sum_{k=2}^{\infty}\mathbb{E}^2\Big(Q_k^{-\alpha/2}\Big)\nonumber \\
\!\!\!\!\!&\overset{(b)}{\leq}&\!\!\!\!\!2c\int_0^{\infty}q^{-\alpha}(1-e^{-\frac{c}{\beta}q})dq -\beta^2\sum_{k=2}^{\infty}(\mathbb{E}(Q_k))^{-\alpha}\nonumber \\
\!\!\!\!\!&=&\!\!\!\!\!-2c^{\alpha}\beta^{1-\alpha}\Gamma(1-\alpha)
-c^{\alpha}\beta^{2-\alpha}\sum_{k=2}^{\infty}k^{-\alpha},~1<\alpha<2,
\end{eqnarray}
\else
\begin{eqnarray}\label{49}
V_o^!(I)\!\!\!\!\!&=&\!\!\!\!\!\sum_{k=2}^{\infty}\beta\mathbb{E}(h_k^2)\mathbb{E}(Q_k^{-\alpha})-\beta^2\mathbb{E}^2\Big(Q_k^{-\alpha/2}h_k\Big) \nonumber \\
\!\!\!\!\!&=&\!\!\!\!\!2c\int_0^{\infty}q^{-\alpha}(1-e^{-\frac{c}{\beta}q})dq -\beta^2\sum_{k=2}^{\infty}\mathbb{E}^2\Big(Q_k^{-\alpha/2}\Big)\nonumber \\
\!\!\!\!\!&\overset{(b)}{\leq}&\!\!\!\!\!2c\int_0^{\infty}q^{-\alpha}(1-e^{-\frac{c}{\beta}q})dq -\beta^2\sum_{k=2}^{\infty}(\mathbb{E}(Q_k))^{-\alpha}\nonumber \\
\!\!\!\!\!&=&\!\!\!\!\!-2c^{\alpha}\!\beta^{1-\alpha}\Gamma(1\!-\!\alpha)
\!-\!c^{\alpha}\!\beta^{2-\alpha}\!\!\sum_{k=2}^{\infty}\!k^{-\alpha},~1\!<\!\alpha\!<\!2, \nonumber \\
\end{eqnarray}
\fi
where $(b)$ follows from Jensen's inequality.
\end{proof}
\emph{Remarks.} From (\ref{49}), the first term is finite only when $1<\alpha<2$, and the second term is an infinite series summation with the $k$-th element $a_k=\frac{1}{k^\alpha}$, $k>1$, which is finite when $\alpha>1$ according to the finiteness of the $p$-series \cite[Eq.~0.233]{gradshteyn1994tables}. Thus, the variance of the interference is finite for $1<\alpha<2$, while the mean interference is finite for a different interval of $\alpha$, i.e., $2<\alpha<4$.

\begin{cor}
When $\alpha=4$, the mean interference is
\ifCLASSOPTIONonecolumn
\begin{equation}
\mathbb{E}_o^!(I)=2cr_0^{-2}+\beta r_0^{-4}(e^{-\frac{c}{\beta}r_0^2}-1)-\frac{c^{2}}{\beta}\mathrm{Ei}\left(-\frac{c}{\beta}r_0^2\right) - \frac{c}{r_0^2}e^{-\frac{c}{\beta}r_0^2},
\end{equation}
\else
\begin{multline}
\mathbb{E}_o^!(I)=2cr_0^{-2}+\beta r_0^{-4}(e^{-\frac{c}{\beta}r_0^2}-1) \\
-\frac{c^{2}}{\beta}\mathrm{Ei}\left(-\frac{c}{\beta}r_0^2\right) - \frac{c}{r_0^2}e^{-\frac{c}{\beta}r_0^2},
\end{multline}
\fi
and the variance of the interference is
\ifCLASSOPTIONonecolumn
\begin{equation}
V_o^!(I)=2\beta r_0^{-2}\left(e^{-\frac{c}{\beta}r_0^2}\xi+\frac{4c}{3\beta}r_0^{-4}-r_0^{-6}\right)-\frac{c^4}{3\beta^3}\mathrm{Ei}\Big(-\frac{c}{\beta}r_0^2\Big)-\beta^2\eta,
\end{equation}
\else
\begin{multline}
V_o^!(I)=2\beta r_0^{-2}\left(e^{-\frac{c}{\beta}r_0^2}\xi+\frac{4c}{3\beta}r_0^{-4}-r_0^{-6}\right) \\
-\frac{c^4}{3\beta^3}\mathrm{Ei}\Big(-\frac{c}{\beta}r_0^2\Big)-\beta^2\eta,
\end{multline}
\fi
where $\xi=r_0^{-6}-\frac{c}{3\beta}r_0^{-4}+\frac{1}{6}(\frac{c}{\beta})^2r_0^{-2}-\frac{1}{6}(\frac{c}{\beta})^3$ and $\eta=\sum\limits_{k=2}^{\infty}\left(\frac{\widetilde{\gamma}(k,\frac{c}{\beta}r_0^2)}{r_0^4}+(\frac{c}{\beta})^2\frac{\Gamma(k-2,\frac{c}{\beta}r_0^2)}{\Gamma(k)}\right)^2$.
\end{cor}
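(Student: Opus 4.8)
The plan is to obtain both formulas as the $\alpha=4$ case of Theorem~\ref{1} and then rewrite the negative‑integer‑order upper incomplete gamma functions that appear. Note first that \eqref{40} and \eqref{41} are already valid at $\alpha=4$: the only potentially singular terms are $\Gamma(1-\tfrac{\alpha}{2},\tfrac{c}{\beta}r_0^2)$ and $\Gamma(1-\alpha,\tfrac{c}{\beta}r_0^2)$, but since $r_0>0$ the lower limit $\tfrac{c}{\beta}r_0^2$ is strictly positive, and $\Gamma(s,x)=\int_x^\infty t^{s-1}e^{-t}\,dt$ converges for every real order $s$. So no limiting argument is needed; what remains is bookkeeping, and the only non‑elementary ingredient that will surface is the exponential integral coming from $\int_{r_0^2}^\infty q^{-1}e^{-cq/\beta}\,dq=-\mathrm{Ei}(-\tfrac{c}{\beta}r_0^2)$.

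For the mean I would record the integration‑by‑parts recurrence $\Gamma(s,x)=\tfrac1s\bigl(\Gamma(s+1,x)-x^{s}e^{-x}\bigr)$ together with $\Gamma(0,x)=E_1(x)=-\mathrm{Ei}(-x)$, and iterate downward once to get $\Gamma(-1,x)=x^{-1}e^{-x}+\mathrm{Ei}(-x)$. Setting $\alpha=4$ in \eqref{40}, the three summands become $2cr_0^{-2}$, $\beta r_0^{-4}(e^{-cr_0^2/\beta}-1)$, and $-c^{2}\beta^{-1}\Gamma(-1,\tfrac{c}{\beta}r_0^2)$; substituting the closed form for $\Gamma(-1,\cdot)$ with $x=\tfrac{c}{\beta}r_0^2$ splits the last term into exactly $-\tfrac{c}{r_0^2}e^{-cr_0^2/\beta}-\tfrac{c^2}{\beta}\mathrm{Ei}(-\tfrac{c}{\beta}r_0^2)$, which yields the claimed identity.

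For the variance, set $\alpha=4$ in \eqref{41}. The infinite series $\beta^2\sum_{k\ge2}(\cdot)^2$ carries over verbatim as $\beta^2\eta$, since its entries $\Gamma(k-2,\tfrac{c}{\beta}r_0^2)$ are well defined for all $k\ge2$ (the $k=2$ entry being $\Gamma(0,\cdot)=E_1(\cdot)$) and $\eta<\infty$ because $\widetilde{\gamma}(k,\cdot)=O(x^k/k!)$ while $\Gamma(k-2,\cdot)/\Gamma(k)\le 1/[(k-1)(k-2)]$ for $k\ge3$, so the $k$‑th summand is $O(k^{-4})$. Continuing the recurrence down gives $\Gamma(-3,x)=\bigl(\tfrac13x^{-3}-\tfrac16x^{-2}+\tfrac16x^{-1}\bigr)e^{-x}+\tfrac16\mathrm{Ei}(-x)$. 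The remaining three terms of \eqref{41} at $\alpha=4$ are $\tfrac{8c}{3}r_0^{-6}$, $-2\beta r_0^{-8}(1-e^{-cr_0^2/\beta})$, and $-2c^{4}\beta^{-3}\Gamma(-3,\tfrac{c}{\beta}r_0^2)$; inserting the closed form for $\Gamma(-3,\cdot)$ and grouping the $e^{-cr_0^2/\beta}$‑free part, the $e^{-cr_0^2/\beta}$‑part, and the $\mathrm{Ei}$‑part separately produces, respectively, $2\beta r_0^{-2}\bigl(\tfrac{4c}{3\beta}r_0^{-4}-r_0^{-6}\bigr)$, $2\beta r_0^{-2}e^{-cr_0^2/\beta}\xi$ with $\xi$ exactly as stated, and $-\tfrac{c^4}{3\beta^3}\mathrm{Ei}(-\tfrac{c}{\beta}r_0^2)$, completing the proof.

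The main obstacle is this last grouping: threading the chain of boundary terms produced by iterating the recurrence down to $\Gamma(-3,\cdot)$ so that the four coefficients of $\xi=r_0^{-6}-\tfrac{c}{3\beta}r_0^{-4}+\tfrac16(\tfrac{c}{\beta})^2r_0^{-2}-\tfrac16(\tfrac{c}{\beta})^3$ and the prefactor $2\beta r_0^{-2}$ all emerge exactly. As a cross‑check (and a route that does not rely on \eqref{40}--\eqref{41} holding at $\alpha=4$) I would re‑derive everything directly as in the preceding $\ell(r)=r^{-\alpha}$ corollary: write $\mathbb{E}_o^!(I)=\beta\sum_{k\ge2}\mathbb{E}[(\max\{r_0^2,Q_k\})^{-2}]$ and, using independence across $k$ with $\mathbb{E}h_k^2=2$ and $T_k^2=T_k$, $V_o^!(I)=2\beta\sum_{k\ge2}\mathbb{E}[(\max\{r_0^2,Q_k\})^{-4}]-\beta^2\sum_{k\ge2}\bigl(\mathbb{E}[(\max\{r_0^2,Q_k\})^{-2}]\bigr)^2$; apply the collapse identity $\beta\sum_{k\ge2}\tfrac{(c/\beta)^k q^{k-1}e^{-cq/\beta}}{(k-1)!}=c(1-e^{-cq/\beta})$ to turn the first two sums into single integrals over $q$, split each at $q=r_0^2$, and evaluate the tails by one and three integrations by parts down to $\int_{r_0^2}^\infty q^{-1}e^{-cq/\beta}\,dq=-\mathrm{Ei}(-\tfrac{c}{\beta}r_0^2)$, while the last sum follows by splitting $\mathbb{E}[(\max\{r_0^2,Q_k\})^{-2}]$ at $r_0^2$ to get $r_0^{-4}\widetilde{\gamma}(k,\tfrac{c}{\beta}r_0^2)+(\tfrac{c}{\beta})^2\Gamma(k-2,\tfrac{c}{\beta}r_0^2)/\Gamma(k)$; the interchange of sum and integral is legitimate because the path loss is bounded and $\alpha=4>1$, making both the integrands at $q\to\infty$ and the resulting $k$‑series absolutely summable.
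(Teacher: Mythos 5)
Your proposal is correct and follows essentially the same route as the paper: the paper's one-line proof specializes the Theorem~\ref{1} expressions to $\alpha=4$ and invokes Gradshteyn--Ryzhik Eq.~3.351(4), which is precisely the identity your recurrence yields in the closed forms $\Gamma(-1,x)=x^{-1}e^{-x}+\mathrm{Ei}(-x)$ and $\Gamma(-3,x)=\bigl(\tfrac13 x^{-3}-\tfrac16 x^{-2}+\tfrac16 x^{-1}\bigr)e^{-x}+\tfrac16\mathrm{Ei}(-x)$. Your bookkeeping (including the grouping that produces $2\beta r_0^{-2}\xi$ and the coefficient $-\tfrac{c^4}{3\beta^3}$ of the $\mathrm{Ei}$ term) checks out, so the argument is complete.
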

The proof mainly follows from $\int_{u}^{\infty}\frac{e^{-px}}{x^{n+1}}dx=(-1)^{n+1}\frac{p^{n}\mathrm{Ei}(-pu)}{n!}+\frac{e^{-pu}}{u^n}\sum\limits_{k=0}^{n-1}\frac{(-1)^kp^ku^k}{n(n-1)\cdots(n-k)}$, (for $p>0$) \cite[Eq.~3.351(4)]{gradshteyn1994tables}, and $\mathrm{Ei}(x)=-\int_{-x}^{\infty}t^{-1}e^{-t}dt$, $x<0$, denotes the exponential integral function.
\begin{figure*}[!t]
\centering
\includegraphics[width=\textwidth]{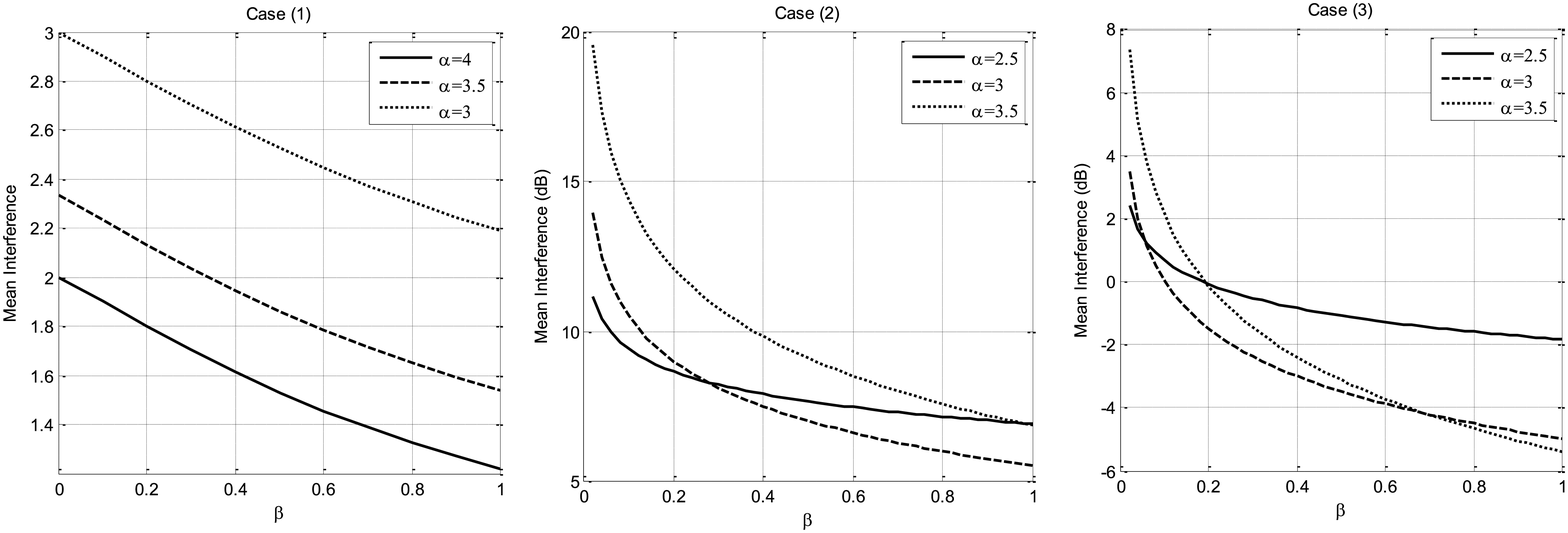}
\caption{The mean interference in $\beta$-Ginibre wireless networks for Case (1): $r_0=1$, $c=1$; Case (2): $r_0=0$, $c=1$; and Case (3): $r_0=0$, $c=0.2$.}
\label{fig:beta_mean_interference}
\end{figure*}

\ifCLASSOPTIONonecolumn
\else
\begin{figure}[!t]
    \centering
    \includegraphics[width=0.5\textwidth]{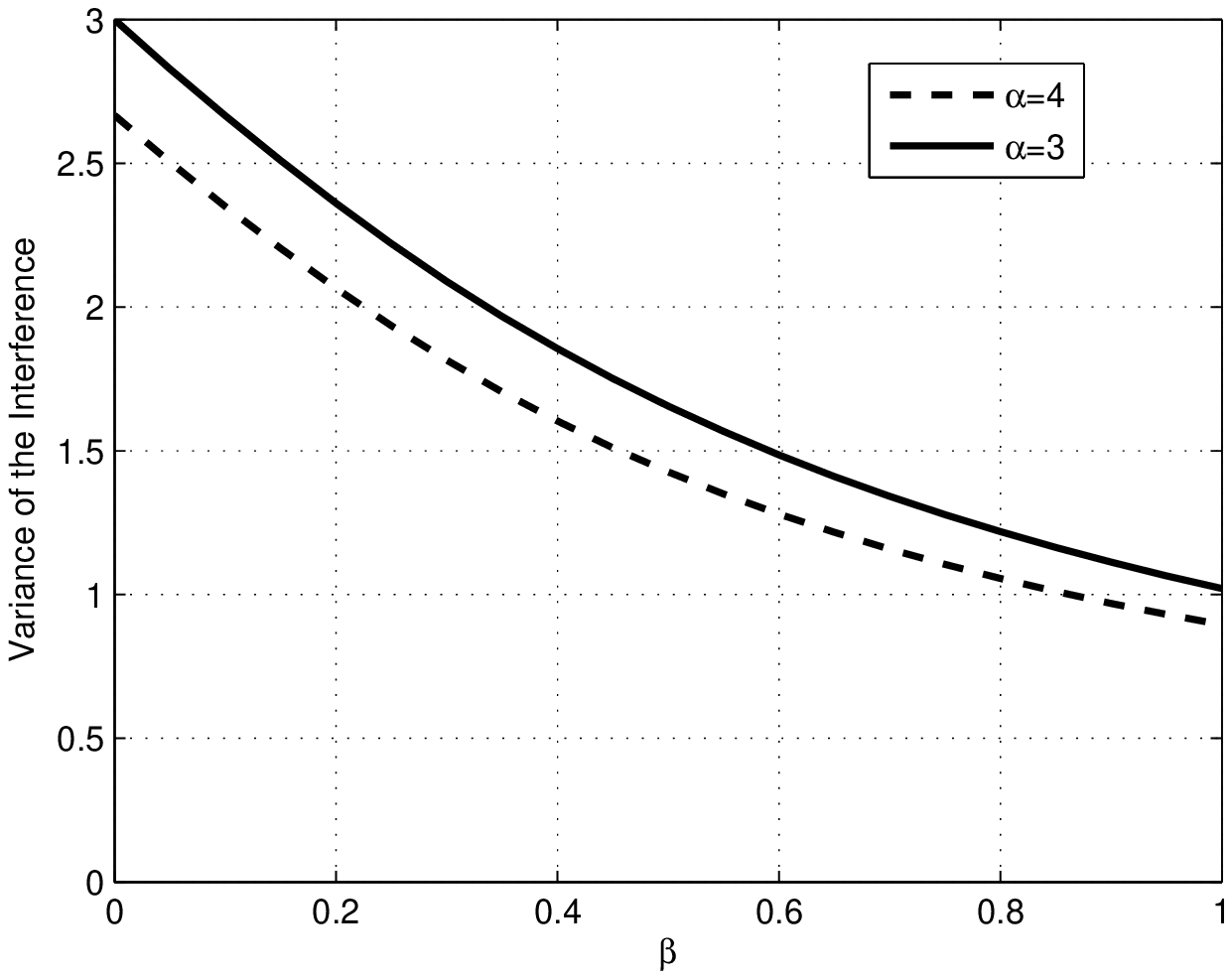}
    \caption{The variance of the interference in $\beta$-Ginibre wireless networks for $\alpha=3,4$, $c=1$ and $r_0=1$.}
    \label{fig:Variance_beta}
\end{figure}
\fi

Since the path-loss model employed is $\ell(r)=(\max\{r_0,r\})^{-\alpha}$, the interferences from nodes closer than $r_0$ are the same, irrespective of their actual distance. For nodes at distances $r>r_0$, the interference with larger $\alpha$ will attenuate faster, and that is why the mean interference decreases as $\alpha$ increases, see Figure \ref{fig:beta_mean_interference}, which shows how the mean interference of $\beta$-Ginibre wireless network changes with $\beta$ for different cases of $\alpha$, $r_0$ and $c$.

When $r_0=0$, the path-loss model $\ell(r)=r^{-\alpha}$ results in strong interference when the interfering nodes are very close to the receiver, i.e. $r\ll1$, and the
corresponding interference will increase as the path-loss exponent $\alpha$ becomes larger. Conversely, when the interfering nodes are located far from the receiver, the corresponding interference will decrease as the path-loss exponent $\alpha$ increases. From Cases (2) and (3), we can see that there are two competing effects, one is the thinning parameter $\beta$ and the other is the intensity $c$. Increasing the former one will result in the decrease of the mean interference while increasing the latter one will cause the mean interference to increase instead.

Specifically, as $\beta$ tends to zero, the
$\beta$-GPP tends to the PPP and interfering nodes are more likely to appear in the vicinity of the receiver, thus dominating the
interference power. Therefore, a larger $\alpha$ leads to a larger mean interference when $\beta$ is small. As $\beta$ increases, the repulsiveness between the
nodes will reduce the impact of the path-loss model and the interference from the far nodes decreases as $\alpha$ increases. It can be
observed from Case (3) of Figure \ref{fig:beta_mean_interference} that the curve with $\alpha=3.5$ has the largest mean interference when
$\beta$ is quite small and the smallest mean interference when $\beta>0.8$, compared with the other two curves. Meanwhile, it decreases the fastest among
the three curves. However, from Case (2), we can see that the curve with $\alpha=3.5$ is
always higher than that with $\alpha=3$, which seems to contradict our statement above. The reason is that the inter-distance
between nodes is not large enough to reduce the impact of the path-loss due to the relatively high intensity $1/\pi$ of the GPP.

Figure \ref{fig:Variance_beta} illustrates how the variance of the interference changes in $\beta$-Ginibre wireless networks with $\beta$ for $\alpha=3,4$, $c=1$ and $r_0=1$. It can be observed that the variance of the interference decreases as $\beta$ increases, which is consistent with the fact that networks with larger $\beta$ are more regular.
\ifCLASSOPTIONonecolumn
\begin{figure*}
\setcaptionwidth{0.45\textwidth}
  \centering
  \begin{minipage}[t]{0.5\textwidth}
    \centering
    \includegraphics[width=1\textwidth]{Variance_Approach2.eps}
  \end{minipage}%
  \begin{minipage}[t]{0.5\textwidth}
    \centering
    \includegraphics[width=1\textwidth]{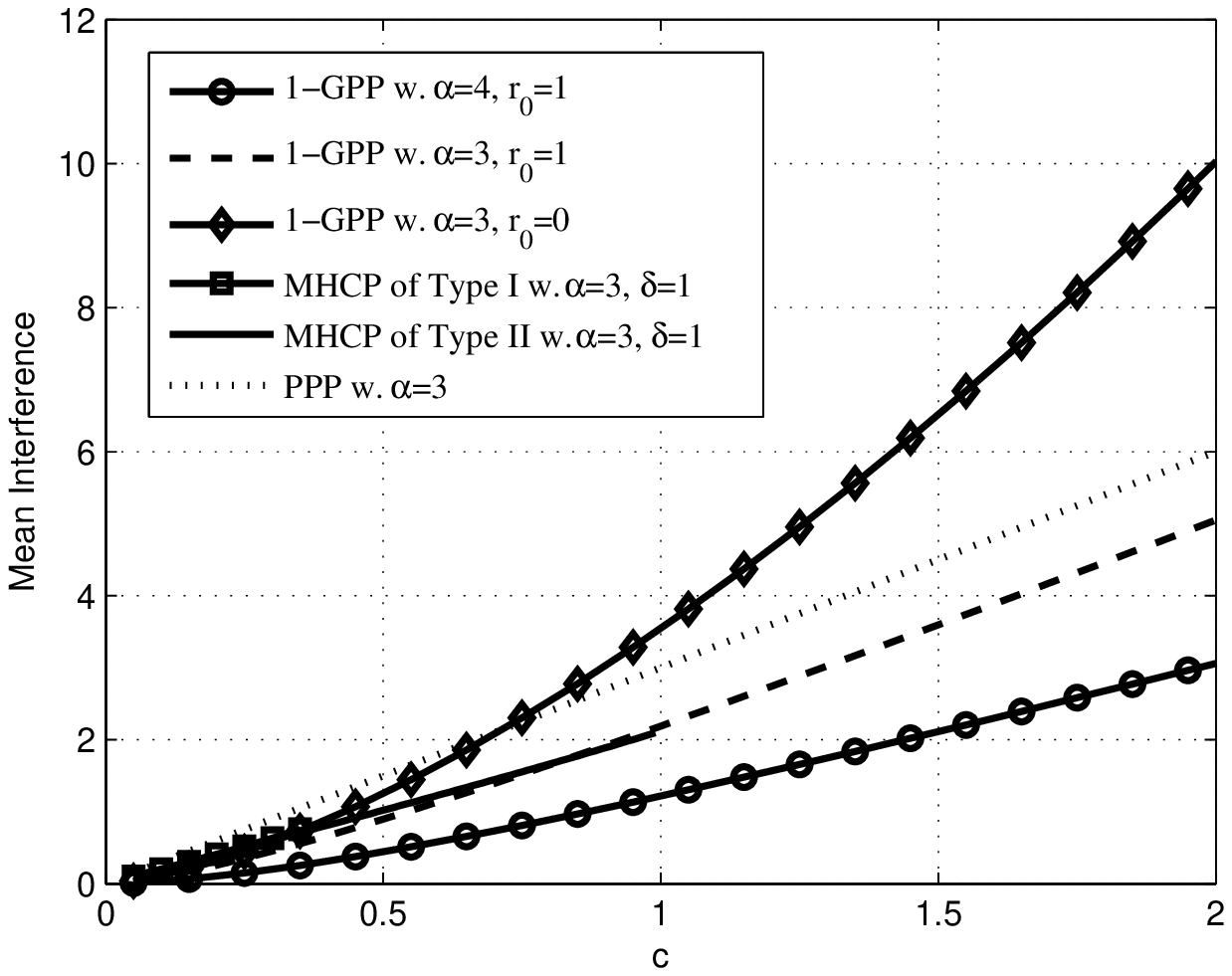}
  \end{minipage}\\[-20pt]
  \begin{minipage}[t]{0.5\textwidth}
  \caption{The variance of the interference in $\beta$-Ginibre wireless networks for $\alpha=3,4$, $c=1$ and $r_0=1$.}
  \label{fig:Variance_beta}
  \end{minipage}%
  \begin{minipage}[t]{0.5\textwidth}
  \caption{The mean interference in Ginibre wireless networks with $\beta=1$ for different $\alpha$ and $r_0$ and a comparison with the MHCP and the PPP. All processes have the same intensity $c/\pi$.}
  \label{fig:mean_interference}
  \end{minipage}%
\end{figure*}
\else
\begin{figure}
    \centering
    \includegraphics[width=0.5\textwidth]{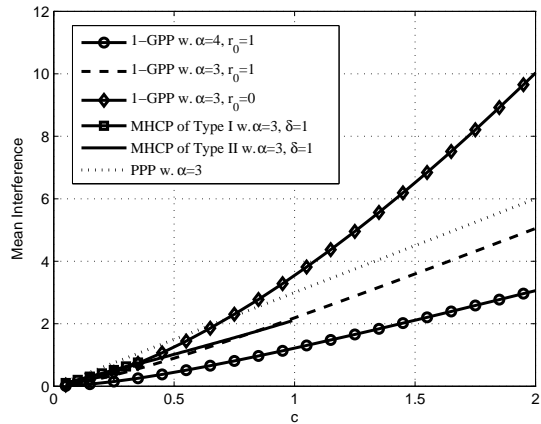}
    \caption{The mean interference in Ginibre wireless networks with $\beta=1$ for different $\alpha$ and $r_0$ and a comparison with the MHCP and the PPP. All processes have the same intensity $c/\pi$.}
    \label{fig:mean_interference}
\end{figure}
\fi
\subsection{Approach II - The Reduced Second Moment Measure}
\label{subsec:mean interference}
The aforementioned approach only applies to the GPP due to its tractable Palm measure. A more general approach to derive the mean interference is to use the reduced second moment measure, which applies to many spatial point processes. In order to further demonstrate the tractability of the GPP, we provide an alternative proof of Theorem \ref{1} using the reduced second moment measure of $\Phi_c$. Since the GPP is motion-invariant, a polar representation is convenient, and the mean interference can be expressed as \cite{haenggi2011mean}
\begin{equation} \label{29}
\mathbb{E}_{o}^{!}(I)=2\pi\int_{0}^{\infty}\ell(r)\mathcal{K}(rdr)=\lambda\int_{0}^{\infty}\ell(r)K'(r)dr.
\end{equation}
By specializing to the class of power path-loss law $\ell(r)=(\max\{r_0,r\})^{-\alpha}$, we obtain
\ifCLASSOPTIONonecolumn
\begin{eqnarray} \label{83}
\mathbb{E}_o^{!}(I)\!\!\!\!&=&\!\!\!\!\lambda\int_0^{\infty}\ell(r)K'(r)dr \nonumber \\
\!\!\!\!&=&\!\!\!\!cr_0^{2-\alpha}\frac{\alpha}{\alpha-2}+\beta r_0^{-\alpha}(e^{-\frac{c}{\beta}r_0^2}-1)-c^{\frac{\alpha}{2}}\beta^{1-\frac{\alpha}{2}}\Gamma\left(1-\frac{\alpha}{2},\frac{c}{\beta}r_0^2\right),\quad \alpha>2,
\end{eqnarray}
\else
\begin{eqnarray} \label{83}
\mathbb{E}_o^{!}(I)\!\!\!\!&=&\!\!\!\!\lambda\int_0^{\infty}\ell(r)K'(r)dr \nonumber \\
\!\!\!\!&=&\!\!\!\!cr_0^{2-\alpha}\frac{\alpha}{\alpha-2}+\beta r_0^{-\alpha}(e^{-\frac{c}{\beta}r_0^2}-1) \nonumber \\
&&-c^{\frac{\alpha}{2}}\beta^{1-\frac{\alpha}{2}}\Gamma\left(1-\frac{\alpha}{2},\frac{c}{\beta}r_0^2\right),\quad \alpha>2,
\end{eqnarray}
\fi
which is identical to the result in Theorem \ref{1}. Next, we compare the mean interference to the one of the PPP and the MHCP.

\subsubsection{Comparison with the Poisson point process}
\label{subsubsec:Mean interference of Poisson point process}
For the PPP, $K(r)=\pi r^2$, $\lambda=c/\pi$, and the mean interference is
\begin{equation}
\mathbb{E}_o^!(I)=\mathbb{E}(I)=\frac{c}{\pi}\int_0^{\infty}(\max\{r_0,r\})^{-\alpha}2\pi rdr.
\end{equation}
When $r_0\rightarrow0$, $\mathbb{E}_o^!(I)\rightarrow\infty$. \\
When $r_0>0$,
\begin{equation}
\!\!\mathbb{E}_o^!(I)\!=\!\frac{c}{\pi}\!\int_0^{r_0}\!r_0^{-\alpha}2\pi rdr\!+\!\frac{c}{\pi}\!\int_{r_0}^{\infty}\!r^{-\alpha}2\pi rdr\!=\!\frac{\alpha r_0^{2-\alpha}}{\alpha-2}c.
\end{equation}
Thus, for the PPP, $\mathbb{E}_o^!(I)=\Theta(c)$, as (of course) for the $\beta$-GPP when $\beta\rightarrow0$. And the difference between the mean interferences of the $1$-GPP and the PPP can be expressed as
\ifCLASSOPTIONonecolumn
\begin{eqnarray}
\triangle\mathbb{E}_o^!(I)\!\!\!\!&=&\!\!\!\!\frac{c}{\pi}\int_0^{\infty}(\max\{r_0,r\})^{-\alpha}2\pi rdr-\frac{c}{\pi}\int_0^{\infty}(\max\{r_0,r\})^{-\alpha}2\pi r(1-e^{-cr^2})dr\nonumber \\
\!\!\!\!&=&\!\!\!\!r_0^{-\alpha}(1-e^{-cr_0^2})+c^{\frac{\alpha}{2}}\int_{cr_0^2}^{\infty}r^{-\frac{\alpha}{2}}e^{-r}dr.
\end{eqnarray}
\else
\begin{eqnarray}
\!\!\triangle\mathbb{E}_o^!(I)\!\!\!\!\!&=&\!\!\!\!\!\frac{c}{\pi}\!\int_0^{\infty}\!(\max\{r_0,r\})^{-\alpha}2\pi rdr \nonumber \\
&&-\frac{c}{\pi}\!\int_0^{\infty}\!(\max\{r_0,r\})^{-\alpha}2\pi r(1-e^{-cr^2})dr \nonumber \\
\!\!\!\!\!&=&\!\!\!\!\!r_0^{-\alpha}(1-e^{-cr_0^2})+c^{\frac{\alpha}{2}}\int_{cr_0^2}^{\infty}r^{-\frac{\alpha}{2}}e^{-r}dr.
\end{eqnarray}
\fi
Letting $f(c)=\triangle\mathbb{E}_o^!(I)$, the first derivative of $f(c)$ can be obtained as
\begin{equation}
f'(c)=\frac{\alpha}{2}c^{\frac{\alpha}{2}-1}\int_{cr_0^2}^{\infty}r^{-\frac{\alpha}{2}}e^{-r}dr.
\end{equation}
Therefore, $f(c)$ is an increasing function since $f'(c)>0$, and when $c\rightarrow\infty$, the maximum of $\triangle\mathbb{E}_o^!(I)$ is obtained as $r_0^{-\alpha}$, i.e., $r_0^{-\alpha}$ is the largest gap between the interferences of the 1-GPP and the PPP.

\subsubsection{Comparison with the Mat\'ern hard-core process}
\label{subsubsec:Mean interference of the hard-core process}
For the MHCP of type I, the mean interference is
\begin{equation}
\mathbb{E}_{o}^{!}(I)=2\pi\lambda_p\exp(\lambda_p\pi\delta^2)\int_{\delta}^{\infty}\ell(r)k_{\rm I}(r)rdr,
\end{equation}
while for the MHCP of type II, the mean interference is
\begin{equation}
\mathbb{E}_{o}^{!}(I)=\frac{2\pi}{\lambda_{\rm II}} \int_{\delta}^{\infty}\ell(r)\lambda_p^2k_{\rm II}(r)rdr.
\end{equation}
When $\lambda_p\rightarrow \infty$, $\lambda_{\rm II}\rightarrow \frac{1}{\pi\delta^2}$, the mean interference is finite \cite{haenggi2011mean}.

Figure \ref{fig:mean_interference} illustrates the mean interference in Ginibre wireless networks with $\beta=1$ for different $\alpha$ and $r_0$, in comparison with the MHCP of type I and II ($\delta=1$) and the PPP. It can be seen that except for the PPP, the mean interferences of the point processes are not proportional to $c$, which means the mean interferences of soft-core and hard-core processes are not proportional to their intensities. From (\ref{29}), whether the mean interference is proportional to $c$ depends on whether the reduced second measure (i.e., the expected number of the interfering sources within distance $r$ of the origin) is proportional to $c$. For the GPP and the MHCP, their reduced second measures are not proportional to the intensity; while for the PPP, where $\mathcal{K}(b(o,r))=cr^2$, its mean interference is proportional to $c$, as shown in the figure.

In the following, we provide an alternative derivation of the variance of the interference for the $\beta$-GPPs. We have
$V_o^!(I)=\mathbb{E}_o^{!}(I^2)-(\mathbb{E}_o^{!}(I))^2$, where
\ifCLASSOPTIONonecolumn
\begin{eqnarray} \label{16}
\mathbb{E}_o^{!}(I^2)\!\!\!\!&=&\!\!\!\!\mathbb{E}_o^{!}\left(\sum_{x\in\Phi}\ell(x)h_x\sum_{y\in\Phi}\ell(y)h_y\right)\nonumber\\
\!\!\!\!&=&\!\!\!\!\mathbb{E}_o^{!}\left(\sum_{x\in\Phi}\ell(x)^2h_x^2\right)+\mathbb{E}_o^{!}\left(\sum_{x,y\in\Phi}^{\neq}\ell(x)\ell(y)h_xh_y\right)\nonumber\\
\!\!\!\!&\overset{(a)}{=}&\!\!\!\!\mathbb{E}(h_x^2)\int_{\mathbb{R}^2}\ell^2(x)\mathcal{K}(dx)+\frac{1}{\lambda}\int_{\mathbb{R}^2}\int_{\mathbb{R}^2}\ell(x)\ell(y)\varrho_{\beta,c}^{(3)}(o,x,y)dxdy,
\end{eqnarray}
\else
\begin{eqnarray} \label{16}
\mathbb{E}_o^{!}(I^2)\!\!\!\!\!\!&=&\!\!\!\!\!\!\mathbb{E}_o^{!}\left(\sum_{x\in\Phi}\ell(x)h_x\sum_{y\in\Phi}\ell(y)h_y\right)\nonumber\\
\!\!\!\!\!\!&=&\!\!\!\!\!\!\mathbb{E}_o^{!}\left(\sum_{x\in\Phi}\ell(x)^2h_x^2\right)+\mathbb{E}_o^{!}\left(\sum_{x,y\in\Phi}^{\neq}\ell(x)\ell(y)h_xh_y\right)\nonumber\\
\!\!\!\!\!\!&\overset{(a)}{=}&\!\!\!\!\!\!\mathbb{E}(h_x^2)\!\!\!\int_{\mathbb{R}^2}\!\!\!\!\ell^2\!(\!x\!)\mathcal{K}(\!dx\!)\!\!+\!\!\frac{1}{\lambda}\!\!\int_{\mathbb{R}^2}\!\!\int_{\mathbb{R}^2}\!\!\!\!\ell(\!x\!)\ell(y)\varrho_{\beta,c}^{(3)}\!(\!o,\!x,\!y\!)dxdy, \nonumber \\
\end{eqnarray}
\fi
and $(a)$ follows from the Lemma 5.2 in \cite{haenggi2009interference}.
For the first term,
\ifCLASSOPTIONonecolumn
\begin{eqnarray} \label{88}
\mathbb{E}(h_x^2)\int_{\mathbb{R}^2}\ell^2(x)\mathcal{K}(dx)\!\!\!\!&=&\!\!\!\!2\lambda\int_0^{\infty}\ell^2(r)K'(r)dr \nonumber \\
\!\!\!\!&=&\!\!\!\!2r_0^{-2\alpha}\left(\frac{\alpha cr_0^2}{\alpha-1}+\beta e^{-\frac{c}{\beta}r_0^2}-\beta\right)-2\frac{c^{\alpha}}{\beta^{\alpha-1}}\Gamma\Big(1-\alpha,\frac{c}{\beta}r_0^2\Big).
\end{eqnarray}
\else
\begin{eqnarray} \label{88}
\mathbb{E}(h_x^2)\int_{\mathbb{R}^2}\ell^2(x)\mathcal{K}(dx)\!\!\!\!&=&\!\!\!\!2\lambda\int_0^{\infty}\ell^2(r)K'(r)dr \nonumber \\
\!\!\!\!&=&\!\!\!\!2r_0^{-2\alpha}\left(\frac{\alpha cr_0^2}{\alpha-1}+\beta e^{-\frac{c}{\beta}r_0^2}-\beta\right) \nonumber \\
&&-2\frac{c^{\alpha}}{\beta^{\alpha-1}}\Gamma\Big(1-\alpha,\frac{c}{\beta}r_0^2\Big).
\end{eqnarray}
\fi
For the second term,
\ifCLASSOPTIONonecolumn
\begin{equation} \label{89}
\frac{1}{\lambda}\int_{\mathbb{R}^2}\int_{\mathbb{R}^2}\ell(x)\ell(y)\varrho_{\beta,c}^{(3)}(o,x,y)dxdy=\frac{2\pi}{c}\int_0^{2\pi}g(\theta)(2\pi-\theta)d\theta,
\end{equation}
\else
\begin{multline} \label{89}
\frac{1}{\lambda}\int_{\mathbb{R}^2}\int_{\mathbb{R}^2}\ell(x)\ell(y)\varrho_{\beta,c}^{(3)}(o,x,y)dxdy \\
=\frac{2\pi}{c}\int_0^{2\pi}g(\theta)(2\pi-\theta)d\theta,
\end{multline}
\fi
where
\ifCLASSOPTIONonecolumn
\begin{equation} \label{19}
g(\theta)\triangleq g(\theta_1,\theta_2)\!\!=\!\!\int_{0}^{\infty}\!\!\int_{0}^{\infty}\!(\max\{r_0,r_1\})^{-\alpha}(\max\{r_0,r_2\})^{-\alpha}\varrho_{\beta,c}^{(3)}(o,r_1e^{j\theta_1},r_2e^{j\theta_2})r_1r_2dr_1dr_2,
\end{equation}
\else
\begin{eqnarray} \label{19}
\!\!g(\theta)\!\!\!\!\!&\triangleq &\!\!\!\!\!g(\theta_1,\theta_2)\!=\!\!\!\int_{0}^{\infty}\!\!\!\!\int_{0}^{\infty}\!\!(\max\{r_0,r_1\})^{-\alpha}(\max\{r_0,r_2\})^{-\alpha} \nonumber \\
&&~~~~~~~~~~~\times\varrho_{\beta,c}^{(3)}(o,r_1e^{j\theta_1},r_2e^{j\theta_2})r_1r_2dr_1dr_2,
\end{eqnarray}
\fi
since $g(\theta_1,\theta_2)$ is merely related to $\theta_1-\theta_2$ from (\ref{19}), thus we let $\theta=\theta_1-\theta_2$.
By substituting (\ref{83}) and (\ref{16}) into $V_o^!(I)$, we can obtain the variance of the interference, which is identical to the one in Theorem \ref{1}.

\begin{figure}
  \centering
  \subfigure[$\alpha=3$.]{
    \label{fig:Approximation:a}
    \includegraphics[width=0.45\textwidth]{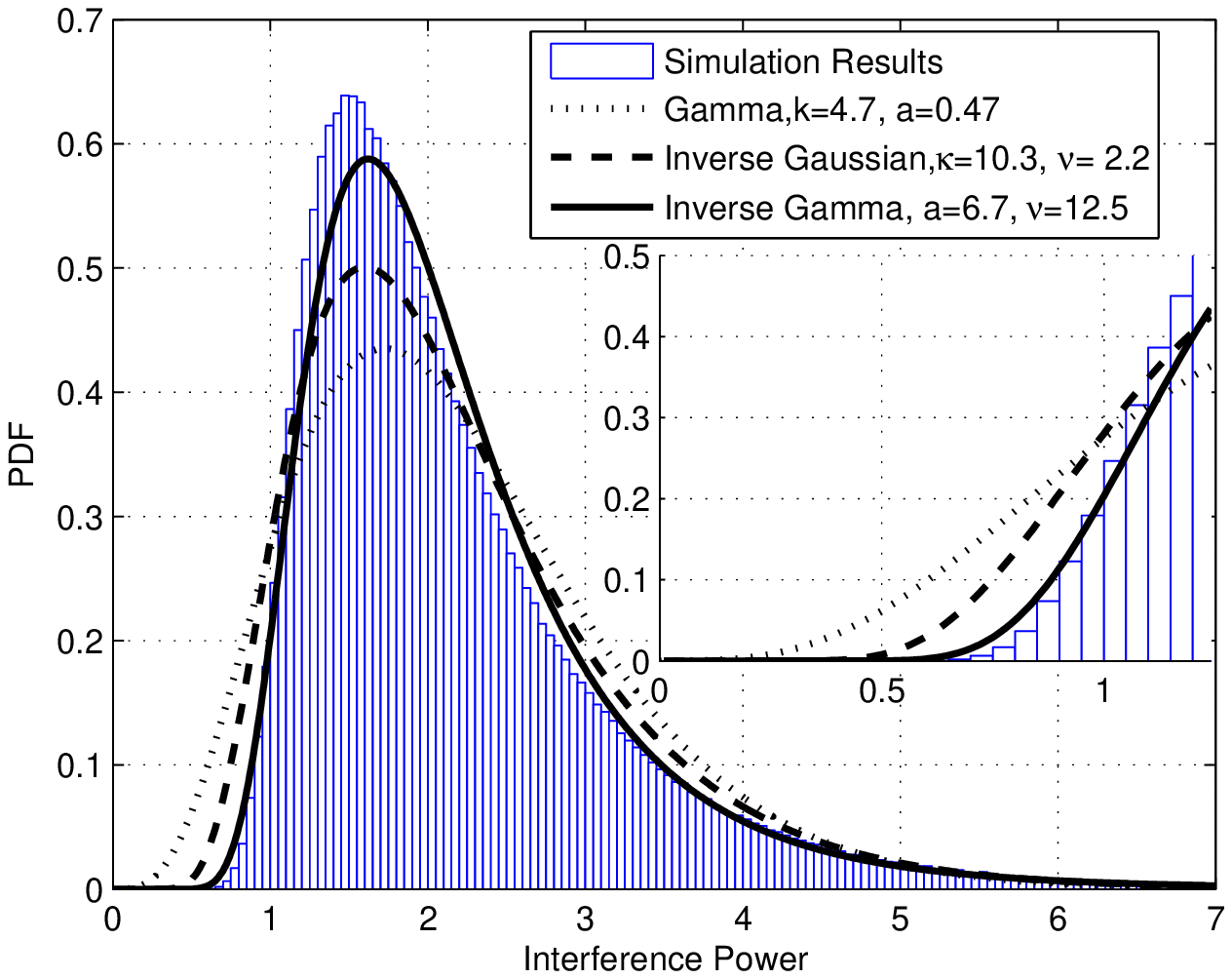}}
  \subfigure[$\alpha=4$.]{
    \label{fig:Approximation:b}
    \includegraphics[width=0.45\textwidth]{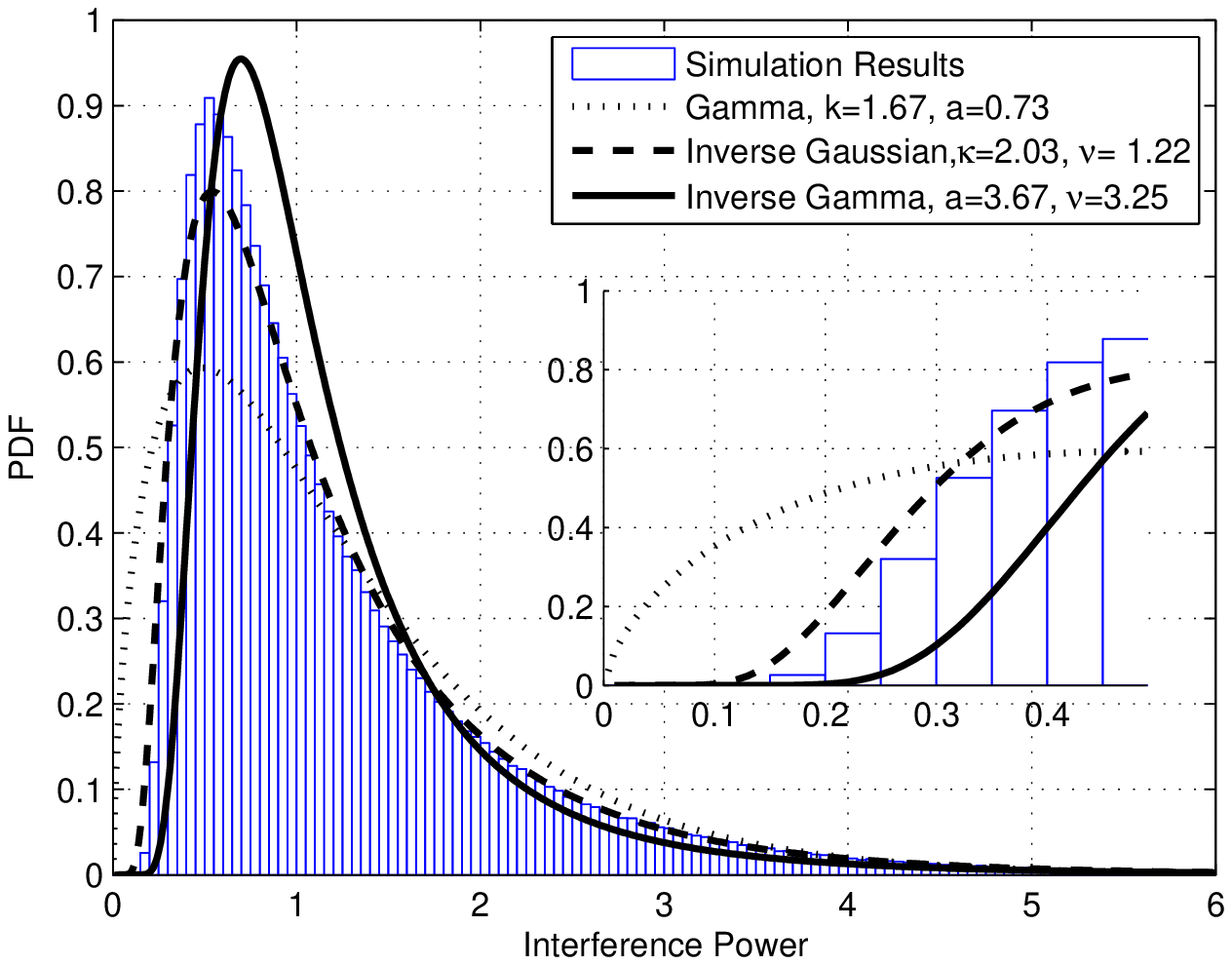}}
  \caption{Empirical PDF of the interference and the corresponding fits.}
  \label{fig:Approximation} %% label for entire figure
\end{figure}

\subsection{Approximation of the Interference Distribution}
Based on the mean and variance of the interference in Theorem \ref{1}, we choose known probability density functions (PDFs) to approximate the PDF of the interference for $c=1$, $\beta=1$ and $r_0=1$. The three candidate densities are the gamma distribution, the inverse Gaussian distribution, and the inverse gamma distribution, given as follows \cite[Chap.~5]{haenggi2009interference},

(1) Gamma distribution:$\\$
$f(x)=x^{k-1}\exp(-x/a)/(a^k\Gamma(k))$ with mean $ka$ and variance $ka^2$;

(2) Inverse Gaussian distribution:$\\$
$f(x)=\left(\frac{\kappa}{2\pi x^3}\right)^{1/2}\exp\left(-\frac{\kappa(x-\nu)^2}{2\nu^2x}\right)$ with mean $\nu$ and variance $\nu^3/\kappa$;

(3) Inverse gamma distribution: $\\$
$f(x)=\nu^ax^{-a-1}\exp(-\nu/x)/\Gamma(a)$ with mean $\nu/(a-1)$ and variance $\nu^2/((a-1)^2(a-2))$.

From the mean and variance derived for the $\beta$-Ginibre wireless networks, we can determine the parameters of the three distributions. In Figure \ref{fig:Approximation}, we have plotted the PDFs of the interference using Monte-Carlo simulation when the underlying node distribution is $\beta$-GPP and the fading is Rayleigh with the bounded path-loss exponent $\alpha=3,4$.
We can observe that the empirical PDF around the origin is 0 in both cases, so fitting a PDF whose first derivative at the origin is 0 may give a good fit to the interference distribution for small interference situations. The derivatives of both the inverse Gaussian distribution and the inverse gamma distribution at the origin are 0 for $\alpha=3,4$ cases. For the gamma distribution, its first derivative at the origin is 0 when $\alpha=3$ but tends to $\infty$ when $\alpha=4$. Therefore, the gamma distribution is less suitable since it is not flat enough at 0 in the case of $\alpha=4$.

According to \cite{haenggi2009interference}, the interference distribution of any motion-invariant point process has exponential decay on the origin and an exponential tail due to the Rayleigh fading. However, the gamma distribution has a $k\!-\!1$th order of decay at the origin and an exponential tail, which leads to a bad fit for $\alpha=3$. Since the inverse Gaussian distribution has an exponential decay at the origin and a slightly super-exponential tail, it gives a good fit in both cases, especially for $\alpha=4$.
The inverse gamma distribution has an exponential decay at the origin and a $a\!+\!1$th order decay at the tail.
When $\alpha=3$, the inverse gamma distribution has a 7.7th order decay at the tail, which approximates the exponential tail well
in the range $[1,7]$ due to the relatively large order, thus it provides the best fit; while for $\alpha=4$, the inverse gamma distribution has a 4.67th order decay of its tail which has a big deviation from the exponential tail, thus providing a bad fit.

\section{Coverage probability}
\label{sec:Coverage probability}
In this section, we derive the coverage probability of the typical user in $\beta$-Ginibre wireless cellular networks where each user is associated with the closest base station. Due to the motion-invariance of the GPPs, we can take the origin as the location of the typical user. The power fading coefficient associated with node $X_i$ is denoted by $h_i$, which is an exponential random variable with mean $1$ (Rayleigh fading), and $h_i$, $i\in\mathbb{N}$, are mutually independent and also independent of the scaled $\beta$-GPP $\Phi_c$. The path-loss function is given by $\ell(r)=r^{-\alpha}$, for $\alpha>2$. In the setting described above, the received signal-to-interference-plus-noise ratio (SINR) of the typical user is
\begin{eqnarray}
\mathrm{SINR}\!\!\!\!&=&\!\!\!\!\frac{h_o\ell(|X_{B_o}|)}{\sigma^2+\sum\limits_{i\in\mathbb{N}\setminus{B_o}}h_i\ell(|X_i|)} \nonumber \\
\!\!\!\!&\overset{(a)}{=}&\!\!\!\!\frac{h_o\ell(|X_{B_o}|)}{\sigma^2+\sum\limits_{k\in\mathbb{N}\setminus{B_o}}h_k Q_k^{-\alpha/2}T_k},
\end{eqnarray}
where $(a)$ follows from Proposition \ref{4}, $B_o$ denotes the index of the base station associated with the typical user, and $\sigma^2$ denotes the thermal noise at the origin. The following theorem provides the coverage probability $\mathbb{P}(\mathrm{SINR}>\theta)$ of the typical user (or, equivalently, the covered area fraction at SINR threshold $\theta$).
\begin{myTheo} \label{cov}
For an SINR threshold $\theta$, the coverage probability of the typical user in the $\beta$-Ginibre wireless network is given by
\ifCLASSOPTIONonecolumn
\begin{equation} \label{23}
p(\theta,\alpha,\beta)=\beta\!\!\int_0^{\infty}\!e^{-s}e^{-\mu\theta\sigma^2(\frac{\beta s}{c})^{\alpha/2}}M(\theta,s,\alpha,\beta)S(\theta,s,\alpha,\beta)ds,
\end{equation}
\else
\begin{multline} \label{23}
p(\theta,\alpha,\beta)= \\
\beta\!\!\int_0^{\infty}\!e^{-s}e^{-\mu\theta\sigma^2(\frac{\beta s}{c})^{\alpha/2}}M(\theta,s,\alpha,\beta)S(\theta,s,\alpha,\beta)ds,
\end{multline}
\fi
where
\ifCLASSOPTIONonecolumn
\begin{equation} \label{24}
M(\theta,s,\alpha,\beta)=\prod\limits_{k=1}^{\infty}\int_s^{\infty}\frac{v^{k-1}e^{-v}}{(k-1)!}\left( \frac{\beta}{1\!+\!\theta (s/v)^{\alpha/2}}\!+\!1\!-\!\beta\right)dv,
\end{equation}
\else
\begin{multline} \label{24}
M(\theta,s,\alpha,\beta)= \\
\prod\limits_{k=1}^{\infty}\int_s^{\infty}\frac{v^{k-1}e^{-v}}{(k-1)!}\left( \frac{\beta}{1\!+\!\theta (s/v)^{\alpha/2}}\!+\!1\!-\!\beta\right)dv,
\end{multline}
\fi
\ifCLASSOPTIONonecolumn
\begin{equation} \label{25}
S(\theta,s,\alpha,\beta)=\sum\limits_{i=1}^{\infty}s^{i-1}\left(\int_s^{\infty}v^{i-1}e^{-v}\left( \frac{\beta}{1\!+\!\theta (s/v)^{\alpha/2}}\!+\!1\!-\!\beta\right)dv\right)^{-1}.
\end{equation}
\else
\begin{multline} \label{25}
S(\theta,s,\alpha,\beta)= \\
\sum\limits_{i=1}^{\infty}s^{i-1}\left(\int_s^{\infty}v^{i-1}e^{-v}\left( \frac{\beta}{1\!+\!\theta (s/v)^{\alpha/2}}\!+\!1\!-\!\beta\right)dv\right)^{-1}.
\end{multline}
\fi
\end{myTheo}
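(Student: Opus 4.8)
The plan is to follow the standard Rayleigh-fading recipe for coverage: condition on the serving link, use that the desired-signal fade $h_o$ is exponential to recast the SINR event as a Laplace-transform evaluation, and then evaluate that transform using the explicit $Q_k$-representation of the $\beta$-GPP from Proposition~\ref{4}. Write $R=|X_{B_o}|$ for the distance to the serving base station, so $R^\alpha=(R^2)^{\alpha/2}$. Since $h_o$ is exponential with rate $\mu$ and independent of $\Phi_c$ and of the other fades,
\begin{equation*}
\mathbb{P}(\mathrm{SINR}>\theta\mid\Phi_c,\{h_k\}_{k\neq B_o})=e^{-\mu\theta\sigma^2 R^\alpha}\exp\!\Big(-\mu\theta R^\alpha\!\!\sum_{k\neq B_o}\! h_k Q_k^{-\alpha/2}T_k\Big).
\end{equation*}
Averaging over the interfering fades and the independent thinning indicators, and using $\mathbb{E}[e^{-\mu\theta R^\alpha h Q_k^{-\alpha/2}}]=\mu/(\mu+\mu\theta R^\alpha Q_k^{-\alpha/2})=1/(1+\theta(R^2/Q_k)^{\alpha/2})$ (so $\mu$ cancels in every interference factor but survives in the noise term), each interferer index $k$ contributes the factor $1-\beta+\beta/(1+\theta(R^2/Q_k)^{\alpha/2})$, which is exactly the bracket appearing in $M$ and $S$.

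The crux is handling the serving base station. By Proposition~\ref{4} the squared moduli are independent but \emph{not} identically distributed, $Q_k\sim\mathrm{gamma}(k,\beta/c)$, so there is no single ``nearest'' index: I would sum over the index $i$ of the nearest point and condition on the event $\{T_i=1,\ Q_i=\rho,\ Q_k>\rho\ \forall k\neq i\}$, which has ``density'' $\beta f_{Q_i}(\rho)\prod_{k\neq i}\mathbb{P}(Q_k>\rho)$ and under which $R^2=\rho$ while, conditionally, each $Q_k$ ($k\neq i$) follows its gamma law restricted to $(\rho,\infty)$. Multiplying this weight by the product over $k\neq i$ of the per-interferer factor integrated against that restricted law, the tail probabilities $\mathbb{P}(Q_k>\rho)$ recombine, leaving, for each $k\neq i$,
\begin{equation*}
D_k(\rho)=(1-\beta)\,\mathbb{P}(Q_k>\rho)+\beta\int_\rho^\infty\frac{f_{Q_k}(q)}{1+\theta(\rho/q)^{\alpha/2}}\,dq,
\end{equation*}
so that $p(\theta,\alpha,\beta)=\beta\sum_{i\ge1}\int_0^\infty f_{Q_i}(\rho)\,e^{-\mu\theta\sigma^2\rho^{\alpha/2}}\prod_{k\neq i}D_k(\rho)\,d\rho$.

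The rest is bookkeeping. Substituting $s=(c/\beta)\rho$ and $q=(\beta/c)v$ one checks $f_{Q_i}(\rho)\,d\rho=\tfrac{s^{i-1}e^{-s}}{(i-1)!}\,ds$, $\mathbb{P}(Q_k>\rho)=\int_s^\infty\tfrac{v^{k-1}e^{-v}}{(k-1)!}\,dv$, and $\int_\rho^\infty f_{Q_k}(q)/(1+\theta(\rho/q)^{\alpha/2})\,dq=\int_s^\infty\tfrac{v^{k-1}e^{-v}}{(k-1)!}\cdot\tfrac{1}{1+\theta(s/v)^{\alpha/2}}\,dv$, hence $D_k(\rho)=\tfrac{1}{(k-1)!}\int_s^\infty v^{k-1}e^{-v}\big(\tfrac{\beta}{1+\theta(s/v)^{\alpha/2}}+1-\beta\big)dv$. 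Writing $\prod_{k\neq i}D_k=(\prod_{k\ge1}D_k)/D_i$ and pulling the full product out of the sum identifies $\prod_{k\ge1}D_k$ with $M(\theta,s,\alpha,\beta)$ and the residual sum $\sum_i\tfrac{s^{i-1}}{(i-1)!}/D_i$ (after the $(i-1)!$ cancels) with $S(\theta,s,\alpha,\beta)$, which gives the claimed formula. I expect the only genuinely analytic points to be (i) convergence of the infinite product $M$ — its factors tend to $1$ since $\Gamma(k,s)/\Gamma(k)\to1$ and the interference factor is bounded — and (ii) justifying the interchange of $\sum_i$ with $\int$, for which nonnegativity of all summands and the crude bound $p\le\beta\le1$ (attained in the limit $\theta\to0$, $\sigma^2\to0$) suffice via monotone convergence.
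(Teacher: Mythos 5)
Your proposal is correct and follows essentially the same route as the paper's Appendix~C proof: decompose the coverage event over the index of the serving point, use the exponential fading to convert the SINR event into per-interferer Laplace factors averaged over the thinning indicators and fades, integrate against the gamma laws of the $Q_k$ restricted to $(\rho,\infty)$, and substitute $s=c\rho/\beta$ to identify the product $M$ and the sum $S$. Your explicit conditioning on $\{T_i=1,\ Q_k>\rho\ \forall k\neq i\}$ is precisely the restriction the paper applies (implicitly) when it limits all the $Q_k$-integrals to $(Q_i,\infty)$ in its final displays, so the two derivations coincide step for step and yield the stated formula.
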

The proof is provided in Appendix \ref{sect:Appendix C}. For $\beta=1$, we retrieve the result in \cite{miyoshi2012gpp}.

\section{Fitting the $\beta$-GPP to Actual Network Deployments}
\label{sec:Fitting the beta-GPP to actual network deployments}
Modeling the spatial structure of cellular networks is one of the most important applications of the $\beta$-GPPs. In this section, we fit the $\beta$-GPP to the locations of the BSs in real cellular networks, shown in Figure \ref{fig:Urban_region} and \ref{fig:Rural_region}, obtained from the Ofcom\footnote{Ofcom - the independent regulator and competition authority for the UK communications industries, where the data are open to the public. Website: http://sitefinder.ofcom.org.uk/search} by carefully adjusting the parameter $\beta$ which controls the repulsion between the points.
Table \ref{table 1} gives the details of the two point sets. The density of each data set is estimated through calculating the total number of points divided by the area of each region. The intensity of the $\beta$-GPP is then set to the density of the given point sets. Based on the estimated density, we fit the $\beta$-GPP to the actual BS deployments with three different metrics, the $L$ function, the $J$ function and the coverage probability. The first two are classical statistics in stochastic geometry, while the last one is a key performance metric of cellular networks. Our objective of fitting is to minimize the vertical average squared error between the metrics obtained from the experimental data and those of the $\beta$-GPP. The vertical average squared error is given by
\begin{equation}
E(a,b)=\int_{a}^{b}\Big(M_{e}(t)-M_{\beta}(t)\Big)^2dt,
\end{equation}
where $a,b\in\mathbb{R}$, $M_{e}(t)$ is the curve of the experimental data and $M_{\beta}(t)$ is the curve corresponding to the $\beta$-GPP.
\ifCLASSOPTIONonecolumn
\begin{figure*}
  \centering
  \begin{minipage}[t]{0.5\textwidth}
    \centering
    \includegraphics[width=1\textwidth]{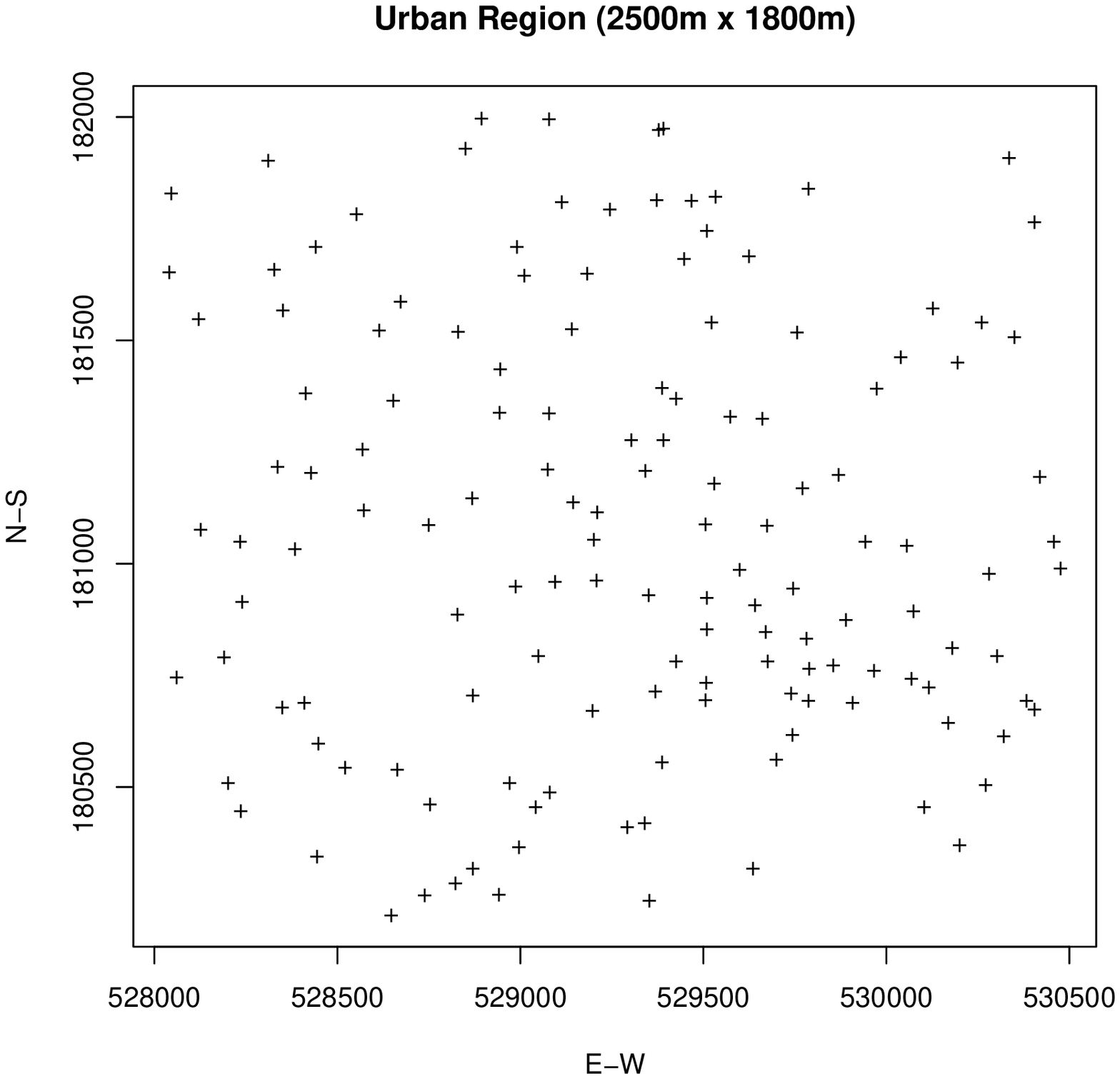}
  \end{minipage}%
  \begin{minipage}[t]{0.5\textwidth}
    \centering
    \includegraphics[width=1\textwidth]{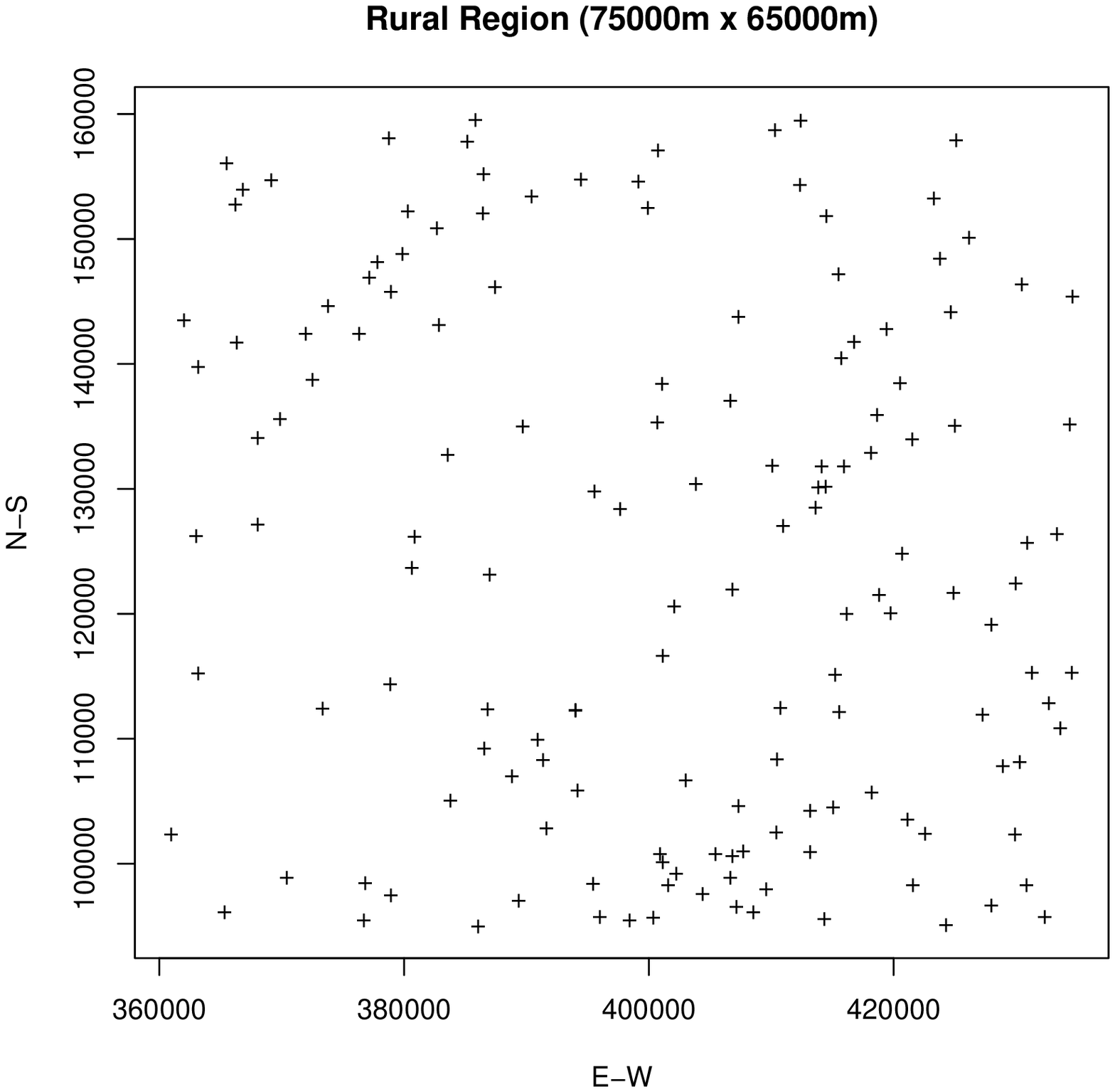}
  \end{minipage}\\[-20pt]
  \begin{minipage}[t]{0.5\textwidth}
  \caption{The locations of BSs in the urban region.}
  \label{fig:Urban_region}
  \end{minipage}%
  \begin{minipage}[t]{0.5\textwidth}
  \caption{The locations of BSs in the rural region.}
  \label{fig:Rural_region}
  \end{minipage}%
\end{figure*}

\else
\begin{figure}
    \centering
    \includegraphics[width=0.5\textwidth]{Urban_region.eps}
    \caption{The locations of BSs in the urban region.}
    \label{fig:Urban_region}
\end{figure}

\begin{figure}
    \centering
    \includegraphics[width=0.5\textwidth]{Rural_region.eps}
    \caption{The locations of BSs in the rural region.}
    \label{fig:Rural_region}
\end{figure}

\fi
\begin{table*}[t]
\setlength{\abovecaptionskip}{0pt}
\setlength{\belowcaptionskip}{10pt}
  \caption{Details of the two fitted Region} \label{table 1}
  \centering
\begin{tabular}{|c|c|c|c|c|}
\hline
& Operator  & Area ($m\times m$)  & Number of BSs & Estimated Density\\
\hline
Urban Region & Vodafone  &$2500\times 1800$     &142 & ${\rm 3.156e\!-\!5m^{-2}}$\\
\hline
Rural Region & Vodafone  & $75000\times 65000$  &149 & ${\rm 3.056e\!-\!8m^{-2}}$\\
\hline
\end{tabular}
\end{table*}
\begin{figure}
  \centering
  \subfigure[Urban region.]{
    \label{fig:J function:a}
    \includegraphics[width=0.45\textwidth]{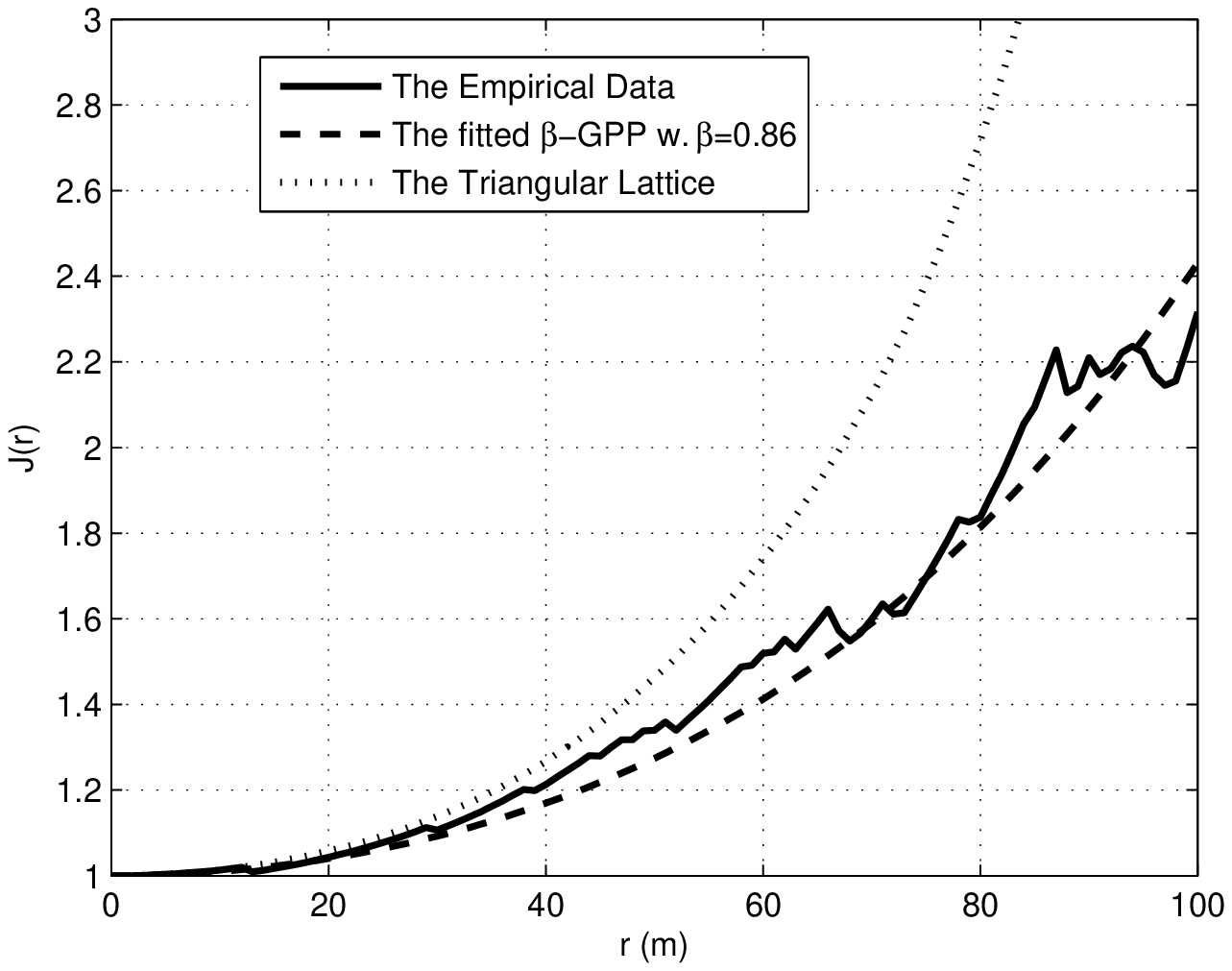}}
  \subfigure[Rural region.]{
    \label{fig:J function:b}
    \includegraphics[width=0.45\textwidth]{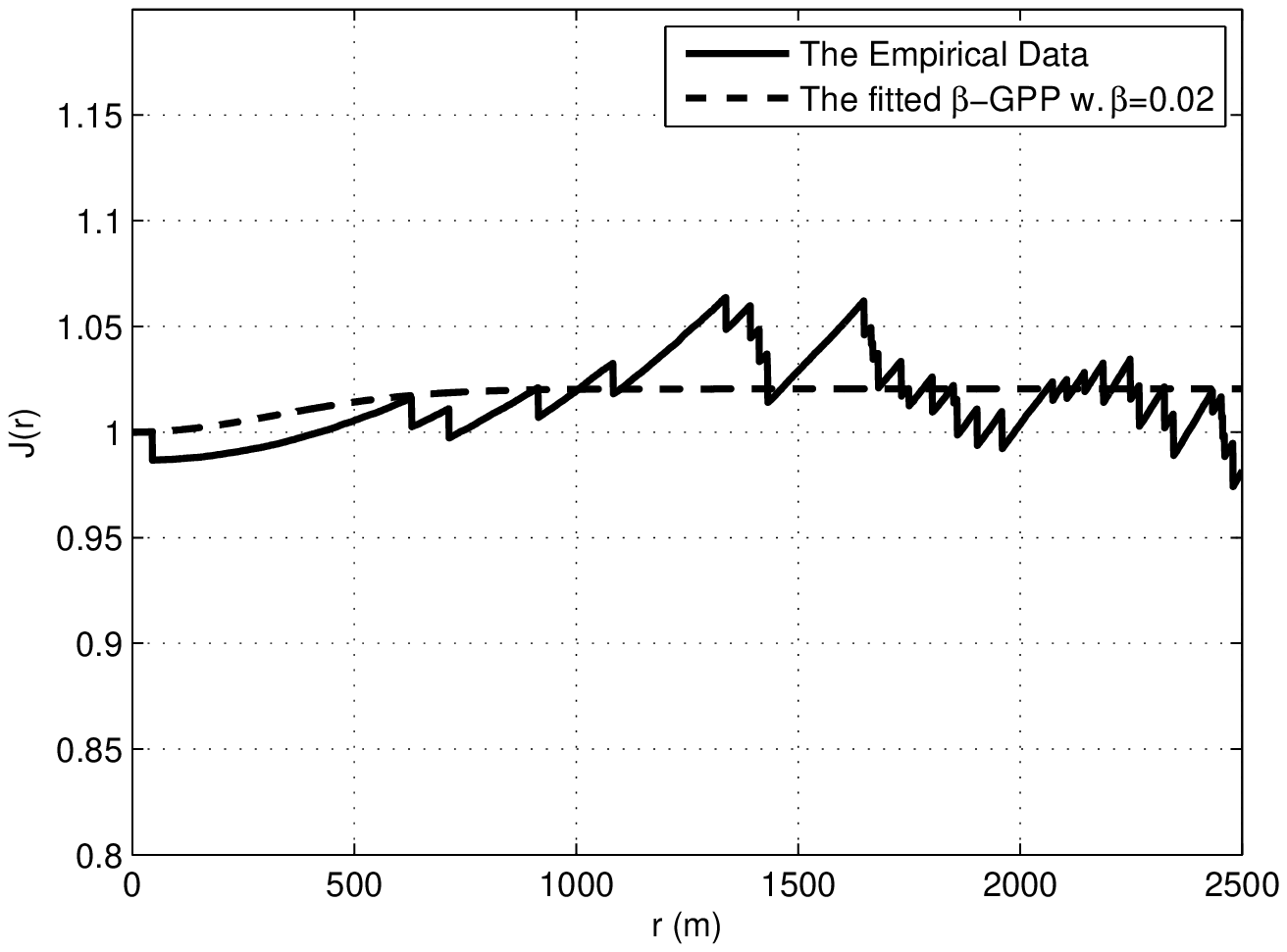}}
  \caption{The $J$ function and the corresponding fits.}
  \label{fig:J function} %% label for entire figure
\end{figure}
\subsection{Fitting Result for J Function}
In this part we present the fitting results for both urban and rural regions using the $J$ function as the metric, shown in Figure \ref{fig:J function}. For comparison, we also include the $J$ function of the triangular lattice. To obtain the empirical $J$ function, we have to find the empirical $F$ and $G$ functions first. In the simulation of the $F$ function, we focus on the central part [$\frac{1}{2}\mathrm{length}\times\frac{1}{2}\mathrm{width}$] of the fitting region to mitigate the boundary effect, and the $F$ function is computed based on $10^5$ uniformly chosen locations at random. The $G$ function is obtained by calculating the distance from each point in the fitting region to its nearest neighbor.
From the figure, we can observe that both the urban and the rural regions are far less regular than the classical lattice model. Specifically, the empirical $J$ function of the rural region tends to that of the PPP and the one of the urban region matches well with that of the $\beta$-GPP with $\beta=0.86$. Thus, the $\beta$-GPP is more suitable and accurate for modeling these deployments.
\begin{figure}
  \centering
  \subfigure[Urban region.]{
    \label{fig:coverage probability:a}
    \includegraphics[width=0.45\textwidth]{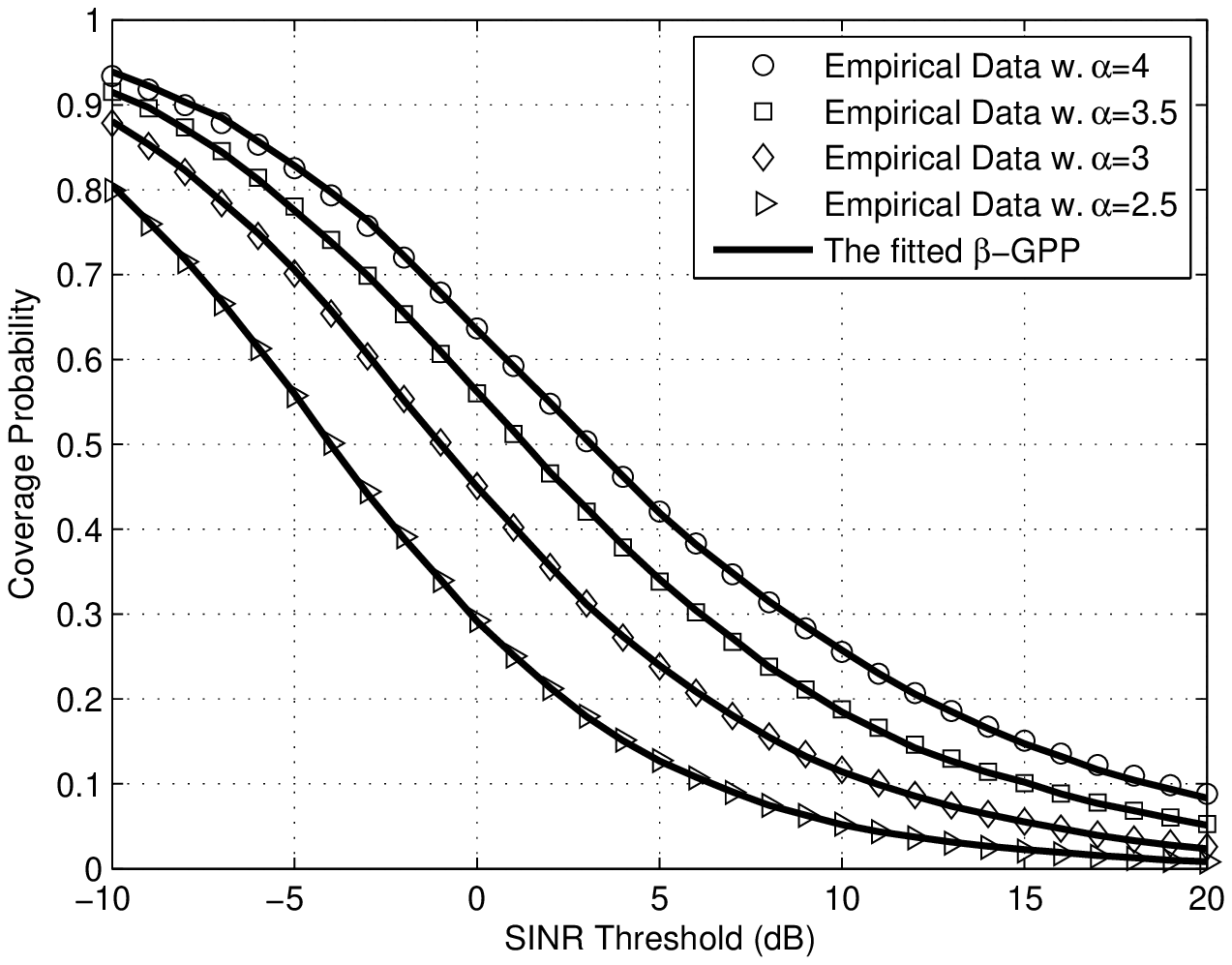}}
  \subfigure[Rural region.]{
    \label{fig:coverage probability:b}
    \includegraphics[width=0.45\textwidth]{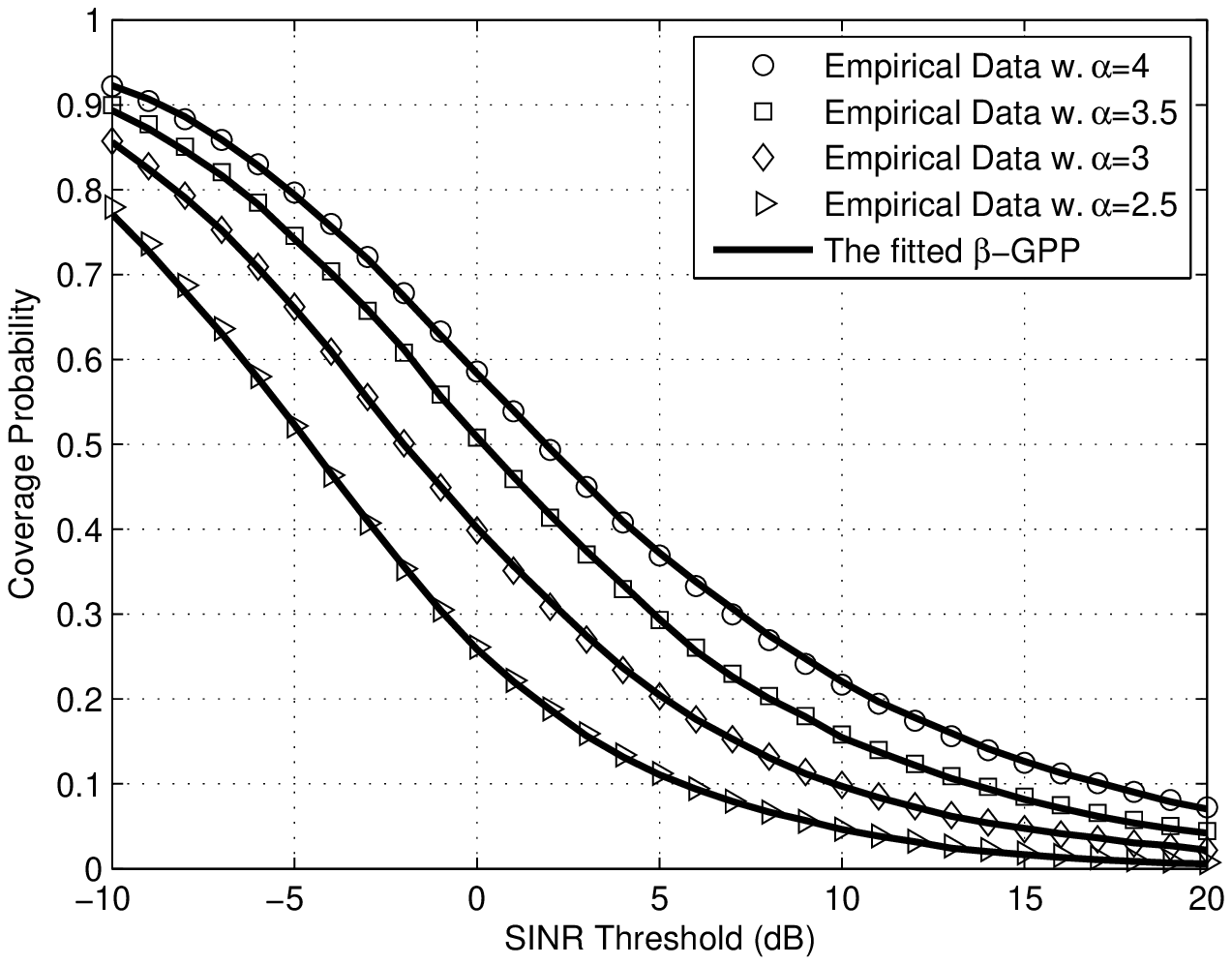}}
  \caption{The coverage probability for two regions with different $\alpha$.}
  \label{fig:coverage probability} %% label for entire figure
\end{figure}
\begin{table}[t]
\setlength{\abovecaptionskip}{0pt}
\setlength{\belowcaptionskip}{10pt}
  \caption{Fitting results for the coverage probability} \label{table 2}
  \centering
\begin{tabular}{|c|c|c|c|c|}
\hline
 & $\alpha$=4 & $\alpha$=3.5 &$\alpha$=3 &$\alpha$=2.5 \\
\hline
Urban $\beta$ & 0.925 & 0.900 &0.975&0.925\\
\hline
Rural $\beta$ & 0.225 & 0.200 &0.225&0.375\\
\hline
\end{tabular}
\end{table}
\subsection{Fitting Result for Coverage Probability}
In this part we give the fitting results for both rural and urban regions with different path loss exponent $\alpha$ using the coverage probability as the metric, shown in Figure \ref{fig:coverage probability}. In the finite region, the empirical coverage probability $P_{c}(\theta)$ can be estimated by determining the covered area fraction for $\mathrm{SINR}>\theta$. In the following simulations, $P_{c}(\theta)$ is obtained by evaluating $10^5$ values of the SINR based on the $10^5$ randomly chosen locations. In order to mitigate the boundary effect, we only use the central [$\frac{1}{2}\mathrm{length}\times\frac{1}{2}\mathrm{width}$] rectangle of the fitting region. It should be noted that for smaller $\alpha$, most of the interference comes from far-away interferers but in the empirical data, they are not present since the fitting region is finite. Therefore, we add an analytical interference term that represents the mean interference obtained from interferers outside the fitting region. To obtain the curve for the $\beta$-GPP, we focus on the coverage probability of the user located at the origin. Instead of using the result in Theorem \ref{cov}, we exploit the exceptional simplicity of the distribution of the squared moduli (distances) of the points and simulate them using the gamma distribution method (see Proposition \ref{4}). We also evaluate $10^5$ values of the SINR based on the $10^5$ realizations of the squared moduli of the points in $\beta$-GPP. Table \ref{table 2} gives the fitting results of $\beta$ for different $\alpha$ in urban and rural regions, which reveals that the urban deployments are fairly regular ($\beta$ is close to 1) but not more regular than the $1$-GPP while the rural ones are quite irregular. As the figure shows, the curve of the fitted $\beta$-GPP and the curve of the point set match extremely well. Therefore, we can conclude that the scaled $\beta$-GPP is capable of modeling actual cellular networks by tuning the parameter $\beta$.

\section{Conclusions}
\label{sec:conclusions}
In this paper, we proposed a wireless network model according to the thinned and re-scaled GPP, which is a repulsive point process. Based on this model, we derived the mean and variance of interference and analyzed their finiteness using two different approaches: one is based on the Palm measure of the $\beta$-GPP and the other is based on the reduced second moment density obtained from the kernel of the $\beta$-GPP. For a bounded path-loss law, both the mean and variance of interference are finite when $\alpha>2$, while for an unbounded path-loss law, the mean interference and the variance are finite on different intervals of $\alpha$. Using the analytically obtained mean and variance, we also provided approximations of the interference distribution using three known PDFs, i.e., the gamma distribution, the inverse Gaussian distribution and the inverse gamma distribution. Through comparison with the Monte-Carlo simulations, we observed that, among the three PDFs, the inverse gamma distribution performs the best fit for $\alpha=3$ and the inverse Gaussian distribution provides the best fit for $\alpha=4$.

Besides, we derived a computable integral representation for the coverage probability of the typical mobile user. To demonstrate the accuracy and practicability of the $\beta$-GPP model for wireless networks, we fitted the point process model to publicly available base station data by carefully adjusting the parameter $\beta$, which controls the degree of repulsion between the points. Through fitting by minimizing the vertical average squared error, we found that the fitted $\beta$-GPP has nearly the same coverage probability as the given point set, and thus, in terms of coverage probability, is an accurate model for real deployments of the base stations. Through the fitting result of $\beta$, we found that the urban deployments are fairly regular but not more regular than the $1$-GPP, while the rural ones are quite irregular, i.e., they have smaller $\beta$ than the urban ones, which demonstrates that the $\beta$-GPP is capable of modeling all the actual cellular networks through tuning the parameter $\beta$.

Compared to the PPP, the $\beta$-GPP better captures the spatial distribution of the nodes in a real network deployment; compared to the MHCP or the Strauss process, the $\beta$-GPP provides more theoretical insights. Therefore, our work highlights the key role of the $\beta$-GPP for wireless networks with repulsion since it balances the accuracy, tractability and practicability tradeoffs quite well.

\section*{Acknowledgment}
The authors would like to thank Anjin Guo for providing the data sets used for the fitting in Section V.

\begin{appendices}
\section{Proof of Theorem \ref{1}}
\label{sect:Appendix A}
For the mean interference, we have
\ifCLASSOPTIONonecolumn
\begin{eqnarray}
\mathbb{E}_o^!(I)\!\!\!\!&=&\!\!\!\!\sum_{k=2}^{\infty}\mathbb{E}\left((\max\{r_0^2,Q_k\})^{-\alpha/2}T_k\right)\nonumber \\
\!\!\!\!&=&\!\!\!\!\beta\sum_{k=2}^{\infty}\int_0^{\infty}(\max\{r_0^2,q\})^{-\alpha/2}\frac{(c/\beta)^k}{\Gamma(k)}q^{k-1}e^{-\frac{c}{\beta}q}dq\nonumber \\
\!\!\!\!&=&\!\!\!\!c\int_0^{\infty}(\max\{r_0^2,q\})^{-\alpha/2}(1-e^{-\frac{c}{\beta}q})dq \nonumber \\
\!\!\!\!&=&\!\!\!\!cr_0^{2-\alpha}\frac{\alpha}{\alpha-2} + \beta r_0^{-\alpha}(e^{-\frac{c}{\beta}r_0^2}-1)-c^{\frac{\alpha}{2}}\beta^{1-\frac{\alpha}{2}}\Gamma\left(1-\frac{\alpha}{2},\frac{c}{\beta}r_0^2\right),\quad\alpha>2.
\end{eqnarray}
\else
\begin{eqnarray}
\mathbb{E}_o^!(I)\!\!\!\!&=&\!\!\!\!\sum_{k=2}^{\infty}\mathbb{E}\left((\max\{r_0^2,Q_k\})^{-\alpha/2}T_k\right)\nonumber \\
\!\!\!\!&=&\!\!\!\!\beta\sum_{k=2}^{\infty}\int_0^{\infty}(\max\{r_0^2,q\})^{-\alpha/2}\frac{(c/\beta)^k}{\Gamma(k)}q^{k-1}e^{-\frac{c}{\beta}q}dq\nonumber \\
\!\!\!\!&=&\!\!\!\!c\int_0^{\infty}(\max\{r_0^2,q\})^{-\alpha/2}(1-e^{-\frac{c}{\beta}q})dq \nonumber \\
\!\!\!\!&=&\!\!\!\!cr_0^{2-\alpha}\frac{\alpha}{\alpha-2} + \beta r_0^{-\alpha}(e^{-\frac{c}{\beta}r_0^2}-1) \nonumber \\
&&-c^{\frac{\alpha}{2}}\beta^{1-\frac{\alpha}{2}}\Gamma\left(1-\frac{\alpha}{2},\frac{c}{\beta}r_0^2\right),\quad\alpha>2.
\end{eqnarray}
\fi
When $c\rightarrow\infty$, $\beta r_0^{-\alpha}(e^{-\frac{c}{\beta}r_0^2}-1)\rightarrow-\beta r_0^{-\alpha}$ and
\ifCLASSOPTIONonecolumn
\begin{equation}
\lim\limits_{c\rightarrow\infty}c^{\frac{\alpha}{2}}\beta^{1-\frac{\alpha}{2}}\Gamma\left(1-\frac{\alpha}{2},\frac{c}{\beta}r_0^2\right)=\lim\limits_{c\rightarrow\infty}c^{\frac{\alpha}{2}}\beta^{1-\frac{\alpha}{2}}\int_{\frac{c}{\beta}r_0^2}^{\infty}t^{-\frac{\alpha}{2}}e^{-t}dt=0.
\end{equation}
\else
\begin{multline}
\lim\limits_{c\rightarrow\infty}c^{\frac{\alpha}{2}}\beta^{1-\frac{\alpha}{2}}\Gamma\left(1-\frac{\alpha}{2},\frac{c}{\beta}r_0^2\right) \\
=\lim\limits_{c\rightarrow\infty}c^{\frac{\alpha}{2}}\beta^{1-\frac{\alpha}{2}}\int_{\frac{c}{\beta}r_0^2}^{\infty}t^{-\frac{\alpha}{2}}e^{-t}dt=0.
\end{multline}
\fi
Thus, $\mathbb{E}_o^!(I)\rightarrow a+bc$ as $c\rightarrow\infty$, where $a=-\beta r_0^{-\alpha}$, $b=\frac{\alpha}{\alpha-2}r_0^{2-\alpha}$, i.e.,
the mean interference approaches an affine function as $c\rightarrow\infty$.

For the variance of the interference, we have
\ifCLASSOPTIONonecolumn
\begin{eqnarray}
V_o^!(I)\!\!\!\!&=&\!\!\!\!\sum\limits_{k=2}^{\infty}V_o^!\Big((\max\{r_0^2,Q_k\})^{\alpha/2}h_kT_k\Big) \nonumber \\
\!\!\!\!&=&\!\!\!\!\sum\limits_{k=2}^{\infty}\beta\mathbb{E}(h_k^2)\mathbb{E}\Big((\max\{r_0^2,Q_k\})^{-\alpha}\Big)-\beta^2\mathbb{E}^2\Big((\max\{r_0^2,Q_k\})^{-\alpha/2}\Big) \nonumber \\
\!\!\!\!&=&\!\!\!\!2c\int_0^{\infty}(\max\{r_0^2,q\})^{-\alpha}(1-e^{-\frac{c}{\beta}q})dq-\beta^2\sum\limits_{k=2}^{\infty}\left(\int_0^{\infty}(\max\{r_0^2,q\})^{-\alpha/2}\frac{(c/\beta)^k}{\Gamma(k)}q^{k-1}e^{-\frac{c}{\beta}q}dq\right)^2 \nonumber \\
\!\!\!\!&=&\!\!\!\!2\frac{c\alpha r_0^{2-2\alpha}}{\alpha-1}-2\beta r_0^{-2\alpha}(1-e^{-\frac{c}{\beta}r_0^2})-2c^{\alpha}\beta^{1-\alpha}\Gamma\left(1-\alpha,\frac{c}{\beta}r_0^2\right) \nonumber \\
&&-\beta^2\sum\limits_{k=2}^{\infty}\left(r_0^{-\alpha}\widetilde{\gamma}\Big(k,\frac{c}{\beta}r_0^2\Big)+\Big(\frac{c}{\beta}\Big)^{\frac{\alpha}{2}}\frac{\Gamma(k-\frac{\alpha}{2})}{\Gamma(k)}\right)^2,\quad\alpha>1.
\end{eqnarray}
\else
\begin{eqnarray}
&&V_o^!(I)=\sum\limits_{k=2}^{\infty}V_o^!\Big((\max\{r_0^2,Q_k\})^{\alpha/2}h_kT_k\Big) \nonumber \\
\!\!\!\!&=&\!\!\!\!\!\!\sum\limits_{k=2}^{\infty}\!\beta\mathbb{E}(h_k^2)\mathbb{E}\!\Big(\!(\max\!\{r_0^2,\!Q\!_k\})^{-\!\alpha}\!\Big)\!\!-\!\!\beta^2\mathbb{E}^2\!\Big(\!(\max\!\{r_0^2,\!Q\!_k\})^{-\!\frac{\alpha}{2}}\!\Big) \nonumber \\
\!\!\!\!&=&\!\!\!\!\!2c\int_0^{\infty}(\max\{r_0^2,q\})^{-\alpha}(1-e^{-\frac{c}{\beta}q})dq \nonumber \\
&&-\beta^2\!\sum\limits_{k=2}^{\infty}\!\left(\int_0^{\infty}\!\!(\max\{r_0^2,q\})^{-\frac{\alpha}{2}}\frac{(c/\beta)^k}{\Gamma(k)}q^{k-1}e^{-\frac{c}{\beta}q}dq\right)^2 \nonumber \\
\!\!\!\!&=&\!\!\!\!\!2\frac{c\alpha r_0^{2\!-\!2\alpha}}{\alpha-1}\!-\!2\beta\frac{1\!-\!e^{-\!\frac{c}{\beta}r_0^2}}{r_0^{2\alpha}}\!-\!2c^{\alpha}\!\beta^{1\!-\!\alpha}\Gamma\!\left(\!1\!-\!\alpha,\frac{c}{\beta}r_0^2\!\right) \nonumber \\
&&-\beta^2\!\sum\limits_{k=2}^{\infty}\!\left(\!r_0^{-\alpha}\widetilde{\gamma}\Big(k,\!\frac{c}{\beta}r_0^2\Big)\!\!+\!\!\Big(\frac{c}{\beta}\Big)^{\frac{\alpha}{2}}\!\frac{\Gamma(k\!-\!\frac{\alpha}{2})}{\Gamma(k)}\!\right)^2,~\alpha>1. \nonumber \\
\end{eqnarray}
\fi

\section{Proof of Theorem \ref{cov}}
\label{sect:Appendix C}
\ifCLASSOPTIONonecolumn
\begin{eqnarray}
\!\!\!\!\!\!\!\!\!\!&&\mathbb{P}(\mathrm{SINR}>\theta)=\sum\limits_{i=1}^{\infty}\mathbb{P}(\mathrm{SINR}>\theta,B_o=i) \nonumber \\
\!\!\!\!&=&\!\!\!\!\sum\limits_{i=1}^{\infty}\mathbb{P}(\mathrm{SINR}>\theta,B_o=i\mid T_i=1)\mathbb{P}(T_i=1) \nonumber \\
\!\!\!\!&=&\!\!\!\!\sum_{i=1}^{\infty}\beta \mathbb{P}\left\{h_i>\frac{\theta\left(\sigma^2+\sum\limits_{k\in\mathbb{N}\setminus\{i\}}h_k Q_k^{-\alpha/2}T_k\right)}{Q_i^{-\alpha/2}}, B_o=i\right\} \nonumber \\
\!\!\!\!&=&\!\!\!\!\sum_{i=1}^{\infty}\beta\mathbb{E}\left\{e^{-\mu\theta\sigma^2Q_i^{\alpha/2}}\exp\left(-\mu\theta \sum\limits_{k\in\mathbb{N}\setminus\{i\}} h_k (Q_i/Q_k)^{\alpha/2}T_k\textbf{1}_{\{Q_i\leq Q_k\}}\right)\right\}\nonumber \\
\!\!\!\!&=&\!\!\!\!\sum_{i=1}^{\infty}\beta\mathbb{E}\left\{e^{-\mu\theta\sigma^2Q_i^{\alpha/2}}\prod\limits_{k\in\mathbb{N}\setminus\{i\}}\left(\beta\exp\left(-\mu\theta h_k (Q_i/Q_k)^{\alpha/2}T_{\{Q_i\leq Q_k\}}\right)+1-\beta\right)\right\}\nonumber \\
\!\!\!\!&=&\!\!\!\!\sum_{i=1}^{\infty}\beta\mathbb{E}\left\{e^{-\mu\theta\sigma^2Q_i^{\alpha/2}}\prod\limits_{k\in\mathbb{N}\setminus\{i\}}\left( \frac{\beta}{1+\theta (Q_i/Q_k)^{\alpha/2}T_{\{Q_i\leq Q_k\}}}+1-\beta\right)\right\}\nonumber \\
\!\!\!\!&=&\!\!\!\!\sum_{i=1}^{\infty}\beta\!\!\int_0^{\infty}\!\frac{(c/\beta)^i}{\Gamma(i)}u^{i-1}e^{-cu/\beta}e^{-\mu\theta u^{\frac{\alpha}{2}}\sigma^2}\!\!\!\int_{u}^{\infty}\!\frac{(c/\beta)^k}{\Gamma(k)}y^{k-1}e^{-cy/\beta}\!\!\!\prod\limits_{k\in\mathbb{N}\setminus\{i\}}\!\!\!\left( \frac{\beta}{1\!+\!\theta (u/y)^{\alpha/2}}\!+\!1\!-\!\beta\right) dydu
\nonumber \\
\!\!\!\!&=&\!\!\!\!\sum_{i=1}^{\infty}\beta\!\!\int_0^{\infty}\!\frac{s^{i-1}}{\Gamma(i)}e^{-s}e^{-\mu\theta\sigma^2(\frac{\beta s}{c})^{\alpha/2}}\!\!\!\!\!\prod\limits_{k\in\mathbb{N}\setminus\{i\}}\!\int_{s}^{\infty}\!\frac{v^{k-1}}{\Gamma(k)}e^{-v}\!\!\left( \frac{\beta}{1\!+\!\theta (s/v)^{\alpha/2}}\!+\!1\!-\!\beta\right) dvds \nonumber \\
\!\!\!\!&=&\!\!\!\!\beta\!\!\int_0^{\infty}\!e^{-s}e^{-\mu\theta\sigma^2(\frac{\beta s}{c})^{\alpha/2}}M(\theta,s,\alpha,\beta)S(\theta,s,\alpha,\beta)ds, \nonumber \\
\end{eqnarray}
\else
\begin{eqnarray}
\!\!\!\!\!\!\!\!\!\!&&\mathbb{P}(\mathrm{SINR}>\theta)=\sum\limits_{i=1}^{\infty}\mathbb{P}(\mathrm{SINR}>\theta,B_o=i) \nonumber \\
\!\!\!\!&=&\!\!\!\!\sum\limits_{i=1}^{\infty}\mathbb{P}(\mathrm{SINR}>\theta,B_o=i\mid T_i=1)\mathbb{P}(T_i=1) \nonumber \\
\!\!\!\!&=&\!\!\!\!\sum_{i=1}^{\infty}\beta \mathbb{P}\!\left\{\!h_i\!>\!\frac{\theta\left(\sigma^2\!+\!\!\sum\limits_{k\in\mathbb{N}\setminus\{i\}}\!\!h_k Q_k^{-\alpha/2}T_k\right)}{Q_i^{-\alpha/2}}, B_o\!=\!i\!\right\} \nonumber \\
\!\!\!\!&=&\!\!\!\!\sum_{i=1}^{\infty}\beta\mathbb{E}\left\{e^{-\mu\theta\sigma^2Q_i^{\alpha/2}}\times\right. \nonumber \\
&&\left.\exp\Bigg(-\mu\theta \sum\limits_{k\in\mathbb{N}\setminus\{i\}} h_k (Q_i/Q_k)^{\alpha/2}T_kT_{\{Q_i\leq Q_k\}}\Bigg)\right\}\nonumber \\
\!\!\!\!&=&\!\!\!\!\sum_{i=1}^{\infty}\beta\mathbb{E}\left\{e^{-\mu\theta\sigma^2Q_i^{\alpha/2}}\times\right. \nonumber \\
&&\left.\!\!\!\!\!\prod\limits_{k\in\mathbb{N}\setminus\{i\}}\!\!\left(\beta\!\exp\!\left(-\mu\theta h_k (Q_i/Q_k)^{\alpha/2}\textbf{1}_{\{Q_i\leq Q_k\}}\right)\!+\!1\!-\!\beta\right)\right\}\nonumber \\
\!\!\!\!&=&\!\!\!\!\sum_{i=1}^{\infty}\beta\mathbb{E}\left\{e^{-\mu\theta\sigma^2Q_i^{\alpha/2}}\times\right. \nonumber \\
&&\left.\prod\limits_{k\in\mathbb{N}\setminus\{i\}}\left( \frac{\beta}{1+\theta (Q_i/Q_k)^{\alpha/2}T_{\{Q_i\leq Q_k\}}}+1-\beta\right)\right\}\nonumber \\
\!\!\!\!&=&\!\!\!\!\sum_{i=1}^{\infty}\beta\!\!\int_0^{\infty}\!\frac{(c/\beta)^i}{\Gamma(i)}u^{i-1}e^{-cu/\beta}e^{-\mu\theta u^{\frac{\alpha}{2}}\sigma^2}\times\nonumber \\
&&\!\!\!\!\!\!\int_{u}^{\infty}\!\!\frac{(c/\beta)^k}{\Gamma(k)}y^{k\!-\!1}\!e^{-\frac{c}{\beta}y}\!\!\!\!\!\prod\limits_{k\in\mathbb{N}\setminus\{i\}}\!\!\!\left(\! \frac{\beta}{1\!+\!\theta (u/y)^{\alpha/2}}\!+\!1\!-\!\beta\!\right)\!dydu
\nonumber \\
\!\!\!\!&=&\!\!\!\!\sum_{i=1}^{\infty}\beta\!\!\int_0^{\infty}\!\frac{s^{i-1}}{\Gamma(i)}e^{-s}e^{-\mu\theta\sigma^2(\frac{\beta s}{c})^{\alpha/2}}\times \nonumber \\
&&\prod\limits_{k\in\mathbb{N}\setminus\{i\}}\!\int_{s}^{\infty}\!\frac{v^{k-1}}{\Gamma(k)}e^{-v}\!\!\left( \frac{\beta}{1\!+\!\theta (s/v)^{\alpha/2}}\!+\!1\!-\!\beta\right) dvds \nonumber \\
\!\!\!\!&=&\!\!\!\!\beta\!\!\int_0^{\infty}\!e^{-s}e^{-\mu\theta\sigma^2(\frac{\beta s}{c})^{\alpha/2}}M(\theta,s,\alpha,\beta)S(\theta,s,\alpha,\beta)ds, \nonumber \\
\end{eqnarray}
\fi
where $\textbf{1}_{A}$ denotes the indicator for set $A$.
\end{appendices}

\end{document}